\theoremstyle{definition}
\newtheorem{theorem}{Theorem}
\newtheorem{proposition}{Proposition}
\newtheorem{lemma}{Lemma}
\newtheorem{definition}{Definition}
\newtheorem{assumption}{Assumption}
\newtheorem{remark}{Remark}
\newcommand{\numrows}{m}
\newcommand{\numcols}{n}
\newcommand{\rowspace}{\Delta(\numrows)}
\newcommand{\colspace}{\Delta(\numcols)}
\newcommand{\Gspace}{\R^{\numrows \times \numcols}}
\newcommand{\rowNE}{x_{\mathrm{NE}}}
\newcommand{\colNE}{y_{\mathrm{NE}}}
\newcommand{\rowSP}{x_{\mathrm{SP}}}
\newcommand{\colSP}{y_{\mathrm{SP}}}
\newcommand{\rowSV}{x_{\mathrm{SV}}}
\newcommand{\colSV}{y_{\mathrm{SV}}}
\newcommand{\Gpert}{{G'}}
\newcommand{\Dmag}{\Delta}
\newcommand{\Dspace}{\mathcal{D}}
\newcommand{\rowRobust}{x_{\mathrm{r}}}
\newcommand{\deceptionRobust}{D_{\mathrm{r}}}
\newcommand{\colRobust}{y_{\mathrm{r}}}
\newcommand{\rowIntelligent}{x_{\mathrm{i}}}
\newcommand{\deceptionIntelligent}{D_{\mathrm{i}}}
\newcommand{\colIntelligent}{y_{\mathrm{i}}}
\newcommand{\rowIntelligentFeas}{\overline{\rowIntelligent}}
\newcommand{\deceptionIntelligentFeas}{\overline{\deceptionIntelligent}}
\newcommand{\colIntelligentFeas}{\overline{\colIntelligent}}
\newcommand{\victimStrategies}[1]{\Phi(#1)}
\newcommand{\victimStrategiesRobust}[1]{\Phi_{r}(#1)}
\newcommand{\victimStrategiesSubrational}[2]{\Phi_{#2}(#1)}
\newcommand{\vPerceived}{v_{\mathrm{p}}}
\newcommand{\vDualPerceived}{u_{\mathrm{p}}}
\newcommand{\victimDualAction}{\omega}
\newcommand{\vIndMax}{v^{\star}}
\newcommand{\Vind}[1]{\mathcal{V}(#1)}
\newcommand{\vmin}{v_{\mathrm{min}}}
\newcommand{\vmax}{v_{\mathrm{max}}}
\newcommand{\vcurr}{v_{\mathrm{curr}}}
\newcommand{\vbest}{v_{\mathrm{best}}}
\newcommand{\Dcurr}{D_{\mathrm{curr}}}
\newcommand{\colCurr}{y_{\mathrm{curr}}}
\newcommand{\Dlim}{\hat{D}}
\newcommand{\ylim}{\hat{y}}
\newcommand{\binSearchTol}{\delta}
\newcommand{\binSearchOutSeq}{S}
\def\BibTeX{{\rm B\kern-.05em{\sc i\kern-.025em b}\kern-.08em
    T\kern-.1667em\lower.7ex\hbox{E}\kern-.125emX}}
\begin{document}

\title{A Novel Framework for Honey-X Deception in Zero-Sum Games}
\author{Brendan T. Gould, \IEEEmembership{Student Member, IEEE}, Kyriakos G. Vamvoudakis, \IEEEmembership{Senior Member, IEEE}
\thanks{This work was supported in part by NSF under grant Nos.  CPS-$2227185$ and SATC-$2231651$, and by ARO under grant No. W$911$NF-$24-1-0174$.}
\thanks{Brendan T. Gould is with the {School of Electrical and Computer Engineering, Georgia Institute of Technology, Atlanta, GA 30332 USA (e-mail: bgould6@gatech.edu). }}
\thanks{Kyriakos G. Vamvoudakis is with the {Daniel Guggenheim School of Aerospace Engineering, Georgia Institute of Technology, Atlanta, GA 30332 USA (e-mail: kyriakos@gatech.edu). }}
}

\maketitle

\begin{abstract}
    In this paper, we present a novel, game-theoretic model of deception in two-player, zero-sum games.
    Our framework leverages an information asymmetry: one player (the \emph{deceiver}) has access to accurate payoff information, while the other (the \emph{victim}) observes a modified version of these payoffs due to the deception strategy employed.
    The deceiver's objective is to choose a deception-action pair that optimally exploits the victim's best response to the altered payoffs, subject to a constraint on the deception's magnitude.
    We characterize the optimal deceptive strategy as the solution to a bi-level optimization problem, and we provide both an exact solution and an efficient method for computing a high-quality feasible point.
    Finally, we demonstrate the effectiveness of our approach on numerical examples inspired by honeypot deception.
\end{abstract}

\begin{IEEEkeywords}
% Enter key words or phrases in alphabetical order, separated by commas. Using the IEEE Thesaurus can help you find the best standardized keywords to fit your article. Use the thesaurus access request form for free access to the IEEE Thesaurus: \underline{https://www.ieee.org/publications/services/thesaurus-acce}\\
% \underline{ss-page.com.}
Game Theory, Deception, Security
\end{IEEEkeywords}

% \begin{frontmatter}

% 	%% Keywords
% 	\begin{keyword}
% 		%% keywords here, in the form: keyword \sep keyword
% 		Game Theory \sep Deception \sep Security

% 		%% PACS codes here, in the form: \PACS code \sep code

% 		%% MSC codes here, in the form: \MSC code \sep code
% 		%% or \MSC[2008] code \sep code (2000 is the default)

% 	\end{keyword}

% \end{frontmatter}

%% Add \usepackage{lineno} before \begin{document} and uncomment 
%% following line to enable line numbers
%% \linenumbers

%% main text
%%

\section{Introduction}
\label{sec:intro}

In today's information-rich environment, security principles and requirements are continually evolving. To maintain safety, intelligent security strategies are essential in both physical~\cite{tambeSecurityGameTheory2011} and cyber~\cite{pawlickGametheoreticTaxonomySurvey2019, pawlickGameTheoryCyber2021} domains.
Deception plays a central role in the development of such strategies. It is a powerful mechanism employed by both attackers and defenders to gain strategic advantage by obscuring their true intentions, methods, or resources from adversaries~\cite{baoCyberDeceptionTechniques2023}. Game theory, which models strategic interactions through the notions of \emph{agents}, \emph{strategies}, and \emph{payoffs}~\cite{basarDynamicNoncooperativeGame1999}, serves as the predominant framework for capturing the conflicts and dynamics inherent in deception.
% Game theory models the interaction between strategic decision makers using three main components: a set of \emph{agents} or \emph{players}, a set of \emph{actions} or \emph{strategies} for each player, and a \emph{payoff function} that assigns each player a real value for every joint action. 
% A game is zero-sum if for every joint action, the sum of the payoffs of every player is zero.
% Because of this, such games are appropriate for modeling adversarial or strictly competitive scenarios; in order for one player's outcome to improve, the others' must worsen by the same amount.
A wide range of game-theoretic settings offers fertile ground for analyzing deception. Previous research has explored models based on static~\cite{hespanhaDeceptionNonCooperativeGames2000}, hierarchical~\cite{ganImitativeFollowerDeception2019}, and dynamic~\cite{fotiadisConcurrentRecedingHorizon2023} games.
% We focus on strictly competitive deception, considering games where the sum of every player's payoff is zero. 

% It is common in prior work to consider strategic deception in the context of Stackelberg Security Games (SSGs)~\cite{ganImitativeFollowerDeception2019, nguyenWhenCanDefender2022, nguyenImitativeAttackerDeception2019}.
% This context is desirable for several reasons, in particular the inherent ordering of Stackelberg games.
% A necessary component of Honey-X deception~\cite{pawlickGametheoreticTaxonomySurvey2019} is one agent (a \emph{deceiver}) declaring information with the intent to influence the behavior of another agent (a \emph{victim}).
% In order for this to be possible, the victim must observe and react to the information presented by the deceiver, which is not possible in a one-shot game if both players choose their actions simultaneously.

\paragraph*{Related Work}

One widely adopted modeling framework for deception is the signaling game~\cite{sobel2020signaling}, played between a sender and a receiver. The sender has valuable information, such as the realization of a random variable, and crafts a message to communicate with the receiver. Based on the sender's communication policy, the receiver updates their belief about the state and selects an action. This setup naturally allows for malicious signaling, where the sender may intentionally mislead the receiver, by promoting incorrect hypotheses~\cite{zhangHypothesisTestingGame2018} or inducing a specific false belief~\cite{basarInducementDesiredBehavior2024}. When the receiver performs Bayesian updates, this form of influence is referred to as \emph{Bayesian Persuasion}~\cite{kamenicaBayesianPersuasion2011, bergemannInformationDesignUnified2019, sayinHierarchicalMultistageGaussian2019}.

Extensions of the signaling game framework further explore the interaction between deception and communication policies. For example, \cite{pawlickModelingAnalysisLeaky2019} introduced a ``detector'' mechanism for the receiver, which probabilistically flags deceptive messages, thereby imposing a cost on deception. Other studies have examined sensor manipulation games, in which the receiver relies on sensors to estimate the state, while the sender can manipulate these sensors or selectively reveal them~\cite{hespanhaSensorManipulationGames2021, vamvoudakisDetectionAdversarialEnvironments2014}.

Another prominent framework for modeling deception is the Stackelberg game~\cite{basarDynamicNoncooperativeGame1999}, where a leader commits to a strategy, anticipating the responses of the followers. This structure is well-suited for security applications, where defenders must take into account potential attacks~\cite{tambeSecurityGameTheory2011}. However, if followers strategically conceal their payoffs, it becomes much more challenging for the leader to devise a robust policy~\cite{ganImitativeFollowerDeception2019, nguyenImitativeAttackerDeception2019}.
% However, counterintuitively, it is not always a disadvantage for the leader to choose their strategy in advance. 
% In some cases, the leader can strategically choose to reveal information about their chosen defense in order to mislead the follower about the overall strategy~\cite{hespanhaDeceptionNonCooperativeGames2000}.

% Our work is most closely related to models of deception that extend Stackelberg games by relaxing the assumption that followers exactly know the leader's chosen action. 
Alternatively, the leader may engage in deception by deviating from their declared strategy, crafting announcements that induce specific follower beliefs while strategically selecting a true strategy that exploits those beliefs. In~\cite{nguyenWhenCanDefender2022}, the leader is allowed to choose any strategy within an interval surrounding their announced policy. Another model~\cite{kiekintveldGameTheoreticFoundationsStrategic2015, pibilGameTheoreticModel2012} decomposes the leader's action into ``visible'' and ``hidden'' components, modelling ``honeypots'' in cybersecurity. In such settings, the follower may conduct reconnaissance to infer the allocation of security resources, while the leader can covertly introduce decoy targets with no intrinsic value to entrap adversaries. Honey-X techniques use information to induce victims to select actions that benefit the deceiver, luring a cyber-attacker to engage with a fake honeypot rather than a genuine server~\cite{pawlickGametheoreticTaxonomySurvey2019}.

Our model generalizes prior honeypot designs~\cite{kiekintveldGameTheoreticFoundationsStrategic2015, pibilGameTheoreticModel2012}, which restricted deception to adding fake targets. We allow broader modifications, including disguising the value of real targets. Unlike~\cite{nguyenImitativeAttackerDeception2019}, which studied \emph{follower} deception, we focus on the leader. While~\cite{nguyenWhenCanDefender2022} simplified deception to strategy announcements, our approach directly models pay-off manipulation, capturing the strategic reasoning of agents more accurately. By ensuring only ``stealthy'' deceptions are used, we also mitigate the risk of honeypot detection. Furthermore, we emphasize the rationality assumptions on both the deceiver and the victim. Since Honey-X deception aims to influence victim decisions, understanding their response is critical~\cite{nguyenWhenCanDefender2022}. 
%This problem is related to sensitivity analysis~\cite{galPostoptimalAnalysesParametric2010, miftariSensitivityAnalysisLinear2024}, however, our work is novel in the sense of using these variations as design choices, chosen by an intelligent deceiver. 

\paragraph*{Contributions} The contribution of this paper is multi-fold. We develop a Honey-X deception framework for Stackelberg games with zero-sum payoffs represented as a matrix. The leader, knowing the true payoffs, announces a modified matrix to mislead the follower, subject to stealth and budget constraints. The follower, unaware of the true payoffs, selects an action based on the announced matrix. The deception is designed to induce the follower to choose strategies that benefit the deceiver.  We consider two behavioral models: ``a trusting'' victim who treats the announced payoffs as accurate, and ``a robust'' victim who accounts for the worst-case deception consistent with the announcement. Surprisingly, both models yield identical responses, suggesting that our framework captures a broad spectrum of victim behaviors. These behaviors can be explicitly characterized as the solution set of a linear program (LP). Consequently, designing optimal deception reduces to modifying the coefficients of the program, specifically the constraint \emph{matrix}, to shift this solution to a desirable location.

\paragraph*{Structure} The remainder of the paper is organized as follows. Section 2 introduces the preliminaries and formulates the problem of deception in zero-sum games in a Stackelberg framework. Section 3 derives optimal strategies for both the deceiver and the victim, including exact and approximate methods to calculate the deception strategies. In Section 4, numerical examples illustrate the effectiveness and computational properties of the proposed deception strategies. Section 5 concludes and outlines directions for future research.

% In particular, we explicitly address meta-rational reasoning that the follower could use to detect deception through discrepancies in the announced payoffs and strategies.
% We ensure that only ``stealthy'' deceptions are chosen. 

% This surprising result shows that there is never a case where the victim should attempt to ``guess'' the motivations behind the deceiver's announced game; though she may be misled by ``trusting'' $G'$, any deviation from the corresponding optimal behaviors would enable the possibility of even worse deceptive outcomes.
% Through this lens, the deceiver's goal can be viewed as using $D$ to assign $\victimStrategies{G+D}$ to a desired location. 

% Since $\victimStrategies{G+D}$ is the solution set of the linear program~\eqref{eq:game_as_lp}, this problem is related to sensitivity analysis~\cite{galPostoptimalAnalysesParametric2010}, specifically with regards to variations in the constraint \emph{matrix}~\cite{miftariSensitivityAnalysisLinear2024}.

\paragraph*{Notation}
\label{subsec:notation}

% \draft{Element wise order.}
% \draft{Constant vectors written with sizes in subscript.}
Subscripts to constants denote their size; $e_{i}$ is the $i$th standard basis vector.
Subscripts to vectors and matrices are indexing operations: for $x \in \mathbb{R}^{n}$ and $A \in \mathbb{R}^{n \times n}$, $x_{i}$ is the $i$th component of $x$, $A_{ij}$ is the $i,j$th entry of $A$, and $A_{i}$ is the $i$th row of $A$.
The element-wise ordering $\le$ compares two vectors. 
Let $\Delta(z)$, $z \in \Z^+$ be the simplex
\begin{equation}
    \label{eq:simplex}
	\Delta(z) = \left\{ p \in \R^z \mid p \ge 0_z, \sum_{i = 1}^z p_i = 1 \right\}.
\end{equation}
For any $x \in \mathbb{R}^{n}$, we use the $\ell$-norm $\lVert x \rVert_{\ell} = \left( \sum_{i=1}^{n} \lvert x_{i} \rvert^{\ell}  \right)^{\frac{1}{\ell}}$.
This norm is extended to linear operators $A \in \mathbb{R}^{m \times n}$ by $\lVert A \rVert_{\ell} = \sup_{\lVert x \rVert_{\ell} = 1} \lVert Ax \rVert_{\ell}$.
In the special cases $\ell = 1$ or $\ell = \infty$ (defined in the limiting sense), we have explicit formulae:
\begin{subequations}
     \label{eq:norms}   
    \begin{align}
    	\lVert A \rVert_{1}      & = \max_{1 \leq j \leq n} \sum_{i=1}^{m} \lvert A_{ij} \rvert = \max_{1 \le j \le n} \left\lVert A^{\top}_{j} \right\rVert_{1}, \label{eq:norms:one_norm} \\
    	\lVert A \rVert_{\infty} & = \max_{1 \leq i \leq m} \sum_{j=1}^{n} \lvert A_{ij} \rvert = \max_{1 \le i \le m} \left\lVert A_{i} \right\rVert_{1}. \label{eq:norms:inf_norm}
    \end{align}
\end{subequations}

\section{Preliminaries and Problem Formulation}
\label{sec:model}

As a basis for our work, we provide a brief review of zero-sum games and a mathematical problem formulation.

\subsection{Zero Sum Games}
\label{subsec:game_setup}

Consider a two-player, zero-sum game, represented by a matrix $G \in \Gspace$.
The \emph{row player} (he) chooses an action $x \in \rowspace$ and the \emph{column player} (she) chooses $y \in \colspace$.
Given these choices, the outcome of the game is given by
\begin{equation}
	\label{eq:game_outcome}
	v_G(x, y) := x^\top G y.
\end{equation}
We interpret this as a payment from the row player to the column player.
% In other words, her payoff function is $v_{G}(x, y)$, while his is $-v_{G}(x, y)$.
Consequently, one player chooses $x$ to minimize $v_G(x, y)$, while the other chooses $y$ to maximize the same quantity. Since each player's payoff depends not only on their own action but also on the opponent's, additional specification is required to describe the solution to this simultaneous optimization problem.
The conventional solution concept for such game-theoretic scenarios is a \emph{Nash equilibrium}~\cite{basarDynamicNoncooperativeGame1999}.

\begin{definition}[Nash Equilibrium]
    For a game $G \in \Gspace$, the tuple $(\rowNE, \colNE) \in \rowspace \times \colspace$ is a \emph{Nash equilibrium} if 
    \begin{subequations}
        \label{eq:nashEQ}
        \begin{align}
            v_G(\rowNE, \colNE) &\le v_G(x, \colNE) \quad \forall x \in \rowspace, \\ 
            v_G(\rowNE, \colNE) &\ge v_G(\rowNE, y) \quad \forall y \in \colspace.
        \end{align}
    \end{subequations}
\end{definition}

% A Nash Equilibrium is a tuple of actions, one for each player, such that no player can improve their own payoff by choosing a different action given the choices of every other player.
% For general games, computing Nash equilibria can be difficult.
% However, there is a well-known, efficient strategy in the zero-sum case: assume that both agents act as worst-case optimizers.
Nash equilibria of zero-sum games can be efficiently computed by assuming that both agents act as worst-case optimizers, choosing strategies $\rowNE$ and $\colNE$ such that
\begin{subequations}
    \label{eq:zs_game_minmax}
	\begin{align}
		\rowNE \in \argmin_{x \in \rowspace} \max_{y \in \colspace} v_{G}(x, y), \label{eq:zs_game_minmax:row} \\
		\colNE \in \argmax_{y \in \colspace} \min_{x \in \rowspace} v_{G}(x, y). \label{eq:zs_game_minmax:col}
	\end{align}
\end{subequations}
The computation of the column player's worst-case optimal strategy can be expressed as a LP:
\begin{equation}
	\label{eq:game_as_lp}
	\colNE \in \argmax_{y \in \colspace} v \quad \text{s.t.}\, \begin{bmatrix} -G & 1_{m \times 1} \end{bmatrix} \begin{bmatrix} y \\ v \end{bmatrix} \le 0. \\
\end{equation}
% Note that the domain $y \in \colspace$ implicitly enforces the additional (linear) constraints that $y \ge 0$ and $\sum_{i = 1}^n y_i = 1$.
The inequality constraint in~\eqref{eq:game_as_lp} reflects the worst-case nature of the column player's optimization, and ensures that she receives a payoff of at least $v$ regardless of the row player's action.
% This can be seen easily by a simple algebraic rearrangement:
% \begin{equation*}
% 	v_{m \times 1} \le Gy.
% \end{equation*}
Using $x$, the row player can select any convex combination of the entries of $Gy$ as an outcome.
Therefore, forcing a lower bound on each of these entries also gives a lower bound on his achievable outcomes.
This idea defines \emph{security value}.

\begin{definition}[Security Strategy and Value~{\cite[Definition 2.3]{basarDynamicNoncooperativeGame1999}}]
    Let $G \in \Gspace$ be a zero-sum game. 
    A strategy $\colSP$ is a column player \emph{security strategy} or \emph{policy} if 
    \begin{equation}
        \label{eq:security_strategy_col}
        \colSV(G) := \min_{x \in \rowspace} x^\top G \colSP \ge \min_{x \in \rowspace} x^\top G y' \quad \forall y' \in \colspace.
    \end{equation}
    We call $\colSV(G)$ the \emph{security value}. 
    The row player's policies $\rowSP$ and value $\rowSV(G)$ are defined respectively.
    % The terms \emph{security strategy} and \emph{security policy} are used interchangeably. 
\end{definition}

The result of the classic minimax theorem~\cite{v.neumannZurTheorieGesellschaftsspiele1928} is that, for any two-player zero-sum game $G$, $\rowSV(G) = \colSV(G)$.
Furthermore, when both agents are playing security policies $\rowSV$ and $\colSV$, the outcome corresponds to these security values and is a Nash equilibrium~\cite{basarDynamicNoncooperativeGame1999}: 
\begin{equation}
    \label{eq:game_value}
    v_G(\rowSP, \colSP) = v_G(\rowNE, \colNE) = \rowSV(G) = \colSV(G).
\end{equation}
% The optimal row action $x^\star$ is given by the dual variables corresponding to the inequality constraints in~\eqref{eq:game_as_lp}.
This allows us to assign a single value to the game.
With abuse of notation, we  write simply $v_{G}$ to represent $v_{G}(\rowNE, \colNE)$, where $\rowNE$ and $\colNE$ are given by~\eqref{eq:zs_game_minmax}.

% \draft{A solution exists for all such programs (could cite Nash eq existence, or find better result for LP specifically).
% 	However, the solution may not be unique if redundant actions exist.
% 	We assume this is not the case.
% 
% 	\begin{assumption}
% 		\label{assumption:eq_uniqueness}
% 		The game payoff matrix $G$ has full row and column rank.
% 	\end{assumption}
% 
% 	It looks like current literature is fine with optimistic solvers - we don't need this assumption.
% }

\subsection{Problem Formulation}
\label{subsec:deception}

To enable a \emph{deceiver} to declare information that influences the behavior of a \emph{victim}, we extend the zero-sum game structure from Section~\ref{subsec:game_setup} to a Stackelberg framework. The row player acts as both the leader and the deceiver, possessing full knowledge of the game matrix $G$. This player selects a \emph{deception} $D \in \mathbb{R}^{m \times n}$ to alter the information available to the column player, who observes the perturbed game
\begin{equation} \label{eq:G_pert} \Gpert := G + D, \end{equation}
without knowing the exact values of either $G$ or $D$. The follower's payoffs are still determined by the true matrix $G$. This setup is illustrated in Figure~\ref{fig:deception_timeline}.

\begin{figure}
    \centering
    \includesvg[width=\linewidth]{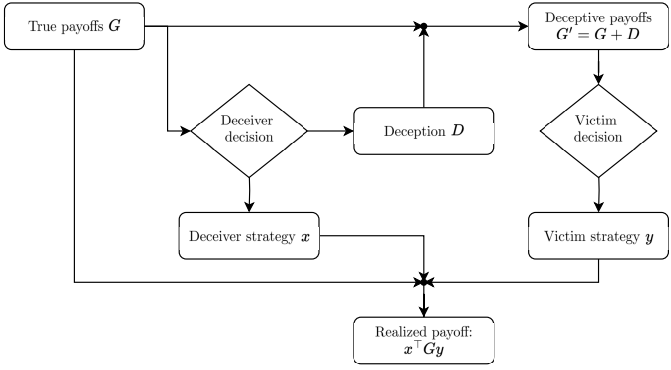}
    \caption{Flowchart depiction of our model of deception. 
    Rectangles represent numeric quantities, and parallelograms identify strategic decisions.
    Arrows indicate causation; the source of the arrow affects the node at its sink.}
    \label{fig:deception_timeline}
\end{figure}

We focus exclusively on \emph{stealthy} deceptions, those that cannot be plausibly distinguished from the truth by the victim. To ensure this, we impose two key restrictions. First, unlike prior work~\cite{nguyenWhenCanDefender2022}, the deceiver may \emph{not} announce a deceptive strategy. Instead, we implicitly assume that announcing $\Gpert$ includes a corresponding security policy for $\Gpert$. If a strategy inconsistent with such a policy were announced, a meta-rational victim could infer that a rational deceiver would not play it, thereby detecting the deception. Crucially, this does not require the deceiver to actually follow the announced strategy, nor does it imply that the victim assumes that they will.
% Instead, it simply removes the ability to announce a strategy that is not rationally aligned with the chosen disturbed game $\Gpert$.

Secondly, we constrain the deceiver's capabilities using a deception budget $\Dmag \in \mathbb{R}^+$. Specifically, we assume

\begin{equation} \label{eq:bounded_deception} \norm{D}_1 \le \Dmag. \end{equation}

Let $\Dspace := { D \in \mathbb{R}^{m \times n} : \lVert D \rVert_{1} \le \Dmag }$ denote the set of admissible deceptions, known to both players. This reflects the victim's limited prior knowledge of the true rewards $G$ and ensures that the deceiver remains stealthy by restricting the magnitude of changes in the announced game $\Gpert$.

From the victim's perspective, it is important to mitigate the effects of misinformation. This leads to computing a security policy that is robust against all admissible stealthy deceptions:

\begin{equation} \colRobust \in \argmax_{y \in \colspace} \min_{x \in \rowspace, D \in \Dspace} v_{\Gpert - D}(x, y). \label{eq:victim_worst_case} \end{equation}

This formulation differs from~\eqref{eq:zs_game_minmax:col} in two ways. First, the objective is written as $v_{\Gpert - D}(x, y)$ instead of $v_G(x, y)$; by~\eqref{eq:G_pert}, these are equivalent, but the former emphasizes that the victim does not know $G$ and cannot compute $v_G(x, y)$ directly. Second, the deceiver now has an additional decision variable—$D$—which increases his influence over the outcome. Despite this added complexity, we show in Theorem~\ref{thm:victim_worst_case} that computing the victim's security policy remains tractable.
% Indeed, the worst-case optimal strategy is to simply play as though $G'$ was the true game.
% This surprising result shows that there is never a case where the victim should attempt to ``guess'' the motivations behind the deceiver's announced game; though she may be misled by ``trusting'' $G'$, any deviation from the corresponding optimal behaviors would enable the possibility of even worse deceptive outcomes.

The deceiver's problem is the natural dual: how to select the optimal deception-action pair against any possible victim response. This leads to the following minimax formulation:

\begin{equation} (\rowRobust, \deceptionRobust) \in \argmin_{x \in \rowspace, D \in \Dspace} \max_{y \in \colspace} v_G(x, y). \label{eq:deceiver_worst_case} \end{equation}

The solution to~\eqref{eq:deceiver_worst_case} is straightforward. Since the victim could, by chance, select an optimal security policy for the true game $G$, any deception may be rendered ineffective. Thus, the best strategy for the deceiver is simply to play a Nash equilibrium of $G$.
% Unfortunately, this question is less relevant than the prior one.
% Without a theory of mind for the victim, it is impossible for the deceiver to predict how his choice of $G'$ will affect her behavior.
% Therefore, the worst-case for the deceiver is always that the victim happens to get lucky and select a strategy that is a security policy for $G$ (despite not having direct knowledge of $G$).
This result is formalized in Proposition~\ref{prop:attacker_worst_case}.

While~\eqref{eq:deceiver_worst_case} identifies the best deception that is \emph{robust} to any victim strategy, it is also valuable to determine the optimal deception for a \emph{specific} victim decision-making model. We consider two such models: ``a trusting'' victim who selects actions according to~\eqref{eq:zs_game_minmax:col} for $\Gpert$, and ``a robust'' victim who solves~\eqref{eq:victim_worst_case}. In Theorem~\ref{thm:victim_worst_case}, we show that both models yield identical behavior. This agreement supports the predictive validity of these models and motivates the following assumption.

\begin{assumption}[Victim Behavior]
    \label{asmp:victim_behavior}
    Let $\Gpert$ be as in~\eqref{eq:G_pert}. 
    We assume that the victim chooses actions 
    \begin{equation}
        y \in \victimStrategies{\Gpert}:= \{ y \in \colspace \mid \Gpert y \ge v_{\Gpert} \}.
    \end{equation}
    Note that by~\eqref{eq:game_as_lp}, $\victimStrategies{\Gpert}$ is exactly the set of column player security policies for $\Gpert$.
\end{assumption}

Under Assumption~\ref{asmp:victim_behavior}, the deceiver can anticipate the victim's response to any chosen deception $D$, and thereby determine their own best action $x$. This prediction involves solving a LP~\eqref{eq:game_as_lp}, making the computation of the optimal deception-action pair a bi-level optimization problem:
\begin{equation}
    (\rowIntelligent, \deceptionIntelligent, \colIntelligent) \in \argmin_{\substack{x \in \rowspace, D \in \Dspace, \\ y \in \colspace}} v_G(x, y) \text{ s.t. } y \in \victimStrategies{G+D}.  \label{eq:blp}
\end{equation}
% \label{eq:blp}
% 	\begin{alignat}{2}
% 		\min_{\substack{x \in \rowspace, D \in \Dspace,                                                                   \\ y \in \colspace}} & v_G(x, y),  \label{eq:blp_ul_obj}                                                                                                       \\
% 		\text{s.t. } & y \in & \argmax_{y \in \colspace} \min_{x' \in \rowspace} v_{\Gpert}(x', y), \label{eq:blp_ll_obj}
% 		% & \text{s.t. } & \begin{bmatrix} -(G + D) & 1_{m \times 1} \end{bmatrix} \begin{bmatrix} y \\ v \end{bmatrix} \le 0. \label{eq:blp_ll_cons}
% 	\end{alignat}
% \end{subequations}
% Since this prediction can be computed by a LP as in~\eqref{eq:game_as_lp}, the deceiver's goal is to assign the solution set of this program to a desired location through his limited deceptive power.
% Specifically, this power is realized as the ability to change the entries of $G'$, which will appear in the inequality constraint matrix for the problem solved by the victim.
% Though the sensitivity of LP solutions to the coefficients of the constraint or objective \emph{vectors} is well understood~\cite{galPostoptimalAnalysesParametric2010}, the relationship between the constraint \emph{matrix} and solutions is more complex~\cite{miftariSensitivityAnalysisLinear2024}.
% Basic sensitivity bounds have been established, but there are no existing methods that use this relationship as a design tool. 
In Section~\ref{subsec:deceiver_results_exact}, we reduce~\eqref{eq:blp} to a single-level optimization problem solvable via branch-and-bound techniques. However, due to its computational complexity, this approach does not scale well to larger instances. Thus, Section~\ref{subsec:deceiver_results_feas} introduces an efficient method for computing a feasible point of~\eqref{eq:blp}. This algorithm solves a logarithmic number of LPs and yields a deception-action pair that still effectively exploits the victim's behavior.
\begin{remark}
	\label{rmk:optimism}
	%A careful reader may take issue with the usage of the victim's strategy $y$ as a \emph{minimizing} decision variable in~\eqref{eq:blp}.
Although the column player chooses $y$ to \emph{maximize} $v_G(x, y)$, $y$ appears as a \emph{minimizer} in~\eqref{eq:blp}. Under Assumption~\ref{asmp:victim_behavior}, the victim selects a Nash strategy for the perturbed game $\Gpert$, ensuring her own goals are prioritized. Using $y$ as a minimizer reflects the optimistic assumption that from all such strategies, the victim selects the one giving the deceiver the best outcome in the true game $G$. 
% SThis reflects an optimistic assumption: the victim chooses a strategy for $\Gpert$ that enables the deceiver to achieve the best possible outcome under the true game $G$. 
% However, there is ambiguity because many Nash equilibria may exist for the same game.
% For zero-sum games, the payoffs of each of these equilibria are identical (see Section~\ref{subsec:game_setup}).
% In contrast, in the deceptive case, victim strategies that are interchangeable for $\Gpert$ may produce varying results for the true game $G$.
% Therefore, treating $y$ as in~\eqref{eq:blp} corresponds to the optimistic assumption that the victim chooses the equilibrium strategy producing the optimal outcome for the deceiver.
This corresponds to a \emph{strong Stackelberg equilibrium}, and is well-established in the literature~\cite{tambeSecurityGameTheory2011, conitzerComputingOptimalStrategy2006, osborneCourseGameTheory1994}. While a pessimistic perspective may seem more aligned with the competitive nature of the model, it is rarely adopted because weak Stackelberg equilibria may not exist. Some work~\cite{guoInducibilityStackelbergEquilibrium2019} has attempted to generalize the pessimistic view by focusing on \emph{inducible} equilibria.
\end{remark}

\begin{remark}
    \label{rmk:bilevel_LP}
Although both the objective and the lower-level problem~\eqref{eq:game_as_lp} in~\eqref{eq:blp} are linear, the overall formulation is not a bi-level LP in the classical sense~\cite{ben-ayedBilevelLinearProgramming1993}. The deception variable $D$ appears in the upper level as a decision variable but enters the lower-level constraints bi-linearly with $y$, resulting in a non-convex dependency between levels. This coupling makes the feasible region of the lower-level problem dependent on upper-level decisions in a complex way. To our knowledge, only one algorithm~\cite{liuValueFunctionBasedSequentialMinimization2023a} attempts to solve such problems, but it yields only approximate solutions and cannot guarantee feasibility within domains such as the probability simplex $\colspace$, rendering it unsuitable for our purposes.
\end{remark}

\section{Deceptive Strategy Derivation}
\label{sec:contributions}

We now derive optimal strategies for both the victim and the deceiver in the deceptive game introduced in Section~\ref{subsec:deception}. In Section~\ref{subsec:victim_results}, we show that a worst-case optimizing victim cannot outperform simply treating the deceptive game $\Gpert$ as the true payoff matrix. Section~\ref{subsec:deceiver_results_exact} then formulates a program to calculate the optimal strategy of the deceiver as defined in~\eqref{eq:blp}. Due to scalability limitations of this exact method, Section~\ref{subsec:deceiver_results_feas} presents a more efficient approach for computing a feasible solution to~\eqref{eq:blp}, by solving a logarithmic number of LPs while effectively leveraging deception.

\subsection{Victim Strategy Derivation}
\label{subsec:victim_results}

In~\eqref{eq:victim_worst_case}, we introduced the problem of how a worst-case optimizing victim should select an action in a deceptive game. We now solve this problem and establish a counterintuitive result: the optimal robust deception strategy is to behave as though the perturbed game $\Gpert$ reflects the true rewards - effectively ignoring the possibility of deception.

\begin{theorem}
	\label{thm:victim_worst_case}
Let $\victimStrategies{\Gpert}$ be defined as in Assumption~\ref{asmp:victim_behavior} and $\victimStrategiesRobust{\Gpert}$ be the solution set of~\eqref{eq:victim_worst_case}.
Then, $\victimStrategiesRobust{\Gpert} = \victimStrategies{\Gpert}$.
\end{theorem}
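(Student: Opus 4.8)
The plan is to evaluate the inner value function $f(y) := \min_{x \in \rowspace, D \in \Dspace} v_{\Gpert - D}(x, y)$ in closed form and show that it differs from the ordinary security objective $\min_{x \in \rowspace} x^\top \Gpert y$ only by an additive constant. Since a constant shift does not move a maximizer, the argmax over $y$ is unchanged, and that argmax is exactly $\victimStrategies{\Gpert}$ by~\eqref{eq:game_as_lp} and Assumption~\ref{asmp:victim_behavior}, which is the claim.

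First I would exploit the fact that the inner problem minimizes \emph{jointly} over $x$ and $D$, so the two can be separated and $D$ eliminated first. Writing $v_{\Gpert - D}(x,y) = x^\top \Gpert y - x^\top D y$, the $D$-minimization reduces to computing $\max_{D \in \Dspace} x^\top D y$. Expanding $x^\top D y = \sum_{j} y_j \sum_{i} x_i D_{ij}$ and recalling from~\eqref{eq:norms:one_norm} that $\lVert D \rVert_1 \le \Dmag$ bounds each column sum $\sum_i \lvert D_{ij} \rvert \le \Dmag$ \emph{independently}, every column can be optimized on its own. Placing the whole budget on the largest coordinate of $x$ — legitimate since $x \ge 0_m$ on the simplex, so there is no sign cancellation — contributes $\Dmag \max_i x_i = \Dmag \lVert x \rVert_\infty$ per unit of column weight. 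Because $\sum_j y_j = 1$, this yields the $y$-independent value $\max_{D \in \Dspace} x^\top D y = \Dmag \lVert x \rVert_\infty$, and hence $f(y) = \min_{x \in \rowspace} \bigl[ x^\top \Gpert y - \Dmag \lVert x \rVert_\infty \bigr]$.

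Next I would observe that the bracketed objective is concave in $x$: it is a linear term minus the convex function $\Dmag \max_i x_i$. The minimum of a concave function over the polytope $\rowspace$ is attained at a vertex $e_k$, and evaluating there collapses the penalty to the constant $\Dmag$, giving $f(y) = \min_{1 \le k \le m} (\Gpert y)_k - \Dmag$. Since $-\Dmag$ is independent of $y$, we conclude $\argmax_{y \in \colspace} f(y) = \argmax_{y \in \colspace} \min_k (\Gpert y)_k$, which by~\eqref{eq:game_as_lp} is precisely $\victimStrategies{\Gpert}$, establishing $\victimStrategiesRobust{\Gpert} = \victimStrategies{\Gpert}$.

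I expect the main obstacle to be the closed-form evaluation of $\max_{D \in \Dspace} x^\top D y$ and the verification that it is independent of $y$. The delicate points are that the operator $1$-norm in~\eqref{eq:norms:one_norm} constrains \emph{columns} rather than rows (so the per-column budgets decouple), and that the nonnegativity of $x$ and $y$ on their simplices is what lets the worst-case $D$ concentrate its budget on a single coordinate with the favorable sign; the subsequent concavity-to-vertex reduction is then routine.
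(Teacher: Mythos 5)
Your proposal is correct and follows essentially the same route as the paper: separate the joint minimization, evaluate $\max_{D \in \Dspace} x^\top D y = \Dmag \max_{1 \le i \le m} x_i$ in closed form, show the inner minimizer over $x$ sits at a vertex of $\rowspace$ so the deception penalty collapses to the constant $-\Dmag$, and conclude the argmax over $y$ is unchanged. The only (harmless) differences are that you compute the inner maximum by decoupling the columns under the operator $1$-norm constraint rather than invoking dual norms, and you justify the vertex step by concavity of the inner objective rather than by the paper's contradiction argument; if anything, your column-wise computation is the more careful of the two.
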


\begin{proof}
	Note that
	\begin{align*}
		\victimStrategiesRobust{\Gpert} &= \argmax_{y \in \colspace} \min_{x \in \rowspace, D \in \Dspace} v_{G' - D}(x, y)                                             \\
		                       & = \argmax_{y \in \colspace} \min_{x \in \rowspace} x^{\top} G' y - \max_{D \in \Dspace} x^{\top} D y,
	\end{align*}
	since the domains of $x$ and $D$ are independent.
	We simplify the inner maximization by applying the properties of \emph{dual norms}~\cite[Appendix A.1]{boydConvexOptimization2004}:
	\begin{align*}
		\max_{D \in \Dspace} x^\top D y = \max_{\left\lVert \frac{D}{\Dmag} \right\rVert_1 \le 1} \left\langle \frac{D}{\Dmag}, \Dmag x y^\top \right\rangle = \Dmag \lVert xy^\top \rVert_\infty .
	\end{align*}
	% This holds since $\left \lVert \frac{D}{\Dmag} \right \rVert_1 \le 1 \iff \lVert D \rVert_1 \le \Dmag$.
	Using the explicit form for the $\lVert \cdot \rVert_\infty$ norm~\eqref{eq:norms:inf_norm}, we find
	\begin{equation*}
		\Dmag \lVert x y^\top \rVert_\infty = \Dmag \max_{1 \le i \le m} \sum_{j = 1}^n \vert x_i y_j \rvert = \Dmag \max_{1 \le i \le m} x_i,
	\end{equation*}
	since both $x$ and $y$ are elements of probability simplices.
    
    We now have
	\begin{equation*}
		\victimStrategiesRobust{\Gpert} = \argmax_{y \in \colspace} \min_{x \in \rowspace} x^\top G' y - \Dmag \max_{1 \le i \le m} x_i.
	\end{equation*}
    Consider any solution $x^\star$ and $y^\star$ to the inner and outer optimization problems, respectively. 
    We claim that $x^\star = e_j$, where $j \in \argmin_{1 \le i \le m} (G'y)_i$.
    Assume by contradiction that $x^\star$ does not have this form, then 
    \begin{equation*}
        e_j^\top \Gpert y^\star \le x^{\star\top} \Gpert y^\star \text{ and } - \Delta \max_{1 \le i \le m} (e_j)_i < -\Delta \max_{1 \le i \le m} x^\star_i.
    \end{equation*}
    Thus, $x^\star$ is \emph{not} a minimizer of the inner problem, the contradiction. 
    But then, $-\Delta \max_{1 \le i \le m} x^\star_i = -\Delta$, and 
    \begin{equation*}
        \victimStrategiesRobust{\Gpert} = \argmax_{y \in \colspace} \min_{x \in \rowspace} x^\top \Gpert y - \Delta = \victimStrategies{\Gpert},
    \end{equation*}
    since the inner objectives differ only by a constant. 
\end{proof}

\begin{remark}
\label{rmk:victim_worst_case_ex}
Theorem~\ref{thm:victim_worst_case} establishes that the behavior of a trusting victim—who assumes $\Gpert$ reflects the true payoff matrix—is indistinguishable from that of a robust victim, who anticipates deception and selects actions to mitigate the worst-case scenario. Consequently, no choice of $\Gpert$ provides the victim with any actionable insight into the actual payoffs $G$. Any deviation from a security strategy optimized for $\Gpert$ would degrade the victim's deception-aware security value. This is advantageous in defensive settings, as it enables the deceiver to strategically design $\Gpert$ to influence the victim's behavior without compromising private information about $G$.
\end{remark}

\subsection{Deceiver Strategy Derivation}
\label{subsec:deceiver_results_exact}
We now examine the deceptive game introduced in Section~\ref{sec:model} from the perspective of the deceiver. As a natural counterpart to the victim's analysis in Section~\ref{subsec:victim_results}, we begin by identifying the optimal deceptive strategy under the assumption that the victim responds in a worst-case manner.

\begin{proposition}
\label{prop:attacker_worst_case}
Consider any game $G \in \Gspace$. Let $D \in \Dspace$, and let $\rowNE$ be a Nash equilibrium strategy for the row player in $G$. Then, the pair $(D, \rowNE)$ solves~\eqref{eq:deceiver_worst_case}. As in standard zero-sum games, this strategy has a security value of $v_G$.
\end{proposition}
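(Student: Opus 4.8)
The plan is to exploit the fact that the deception variable $D$ never actually enters the objective of~\eqref{eq:deceiver_worst_case}. The inner maximization $\max_{y \in \colspace} v_G(x, y)$ is evaluated on the \emph{true} game matrix $G$, so for any fixed $x \in \rowspace$ its value is independent of the choice of $D \in \Dspace$. Consequently the outer minimization over the pair decouples: the minimization over $D$ is vacuous, and every admissible $D \in \Dspace$ is equally optimal.

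First I would make this decoupling explicit by writing
\begin{equation*}
    \min_{x \in \rowspace, D \in \Dspace} \max_{y \in \colspace} v_G(x, y) = \min_{x \in \rowspace} \max_{y \in \colspace} v_G(x, y),
\end{equation*}
which is valid because the objective does not involve $D$ and $\Dspace$ is nonempty. The right-hand side is precisely the row player's worst-case optimization~\eqref{eq:zs_game_minmax:row}, whose minimizers are the Nash equilibrium row strategies $\rowNE$. Hence any pair $(D, \rowNE)$ with $D \in \Dspace$ attains the optimal value, which establishes that $(D, \rowNE)$ solves~\eqref{eq:deceiver_worst_case}.

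Finally, to identify the optimal value I would invoke~\eqref{eq:game_value}: the value of the inner maximization at $\rowNE$ equals the game value $v_G = \rowSV(G) = \colSV(G)$, confirming the claimed security value. I do not expect any genuine obstacle, since the argument is essentially a single decoupling observation. The only point worth stressing is the contrast with the bi-level formulation~\eqref{eq:blp}: there $D$ reshapes the victim's response set $\victimStrategies{G+D}$, whereas the robust formulation~\eqref{eq:deceiver_worst_case} grants the victim unrestricted worst-case freedom over $y$, rendering the deception inert.
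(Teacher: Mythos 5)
Your proposal is correct and follows essentially the same route as the paper's own proof: both observe that $v_G(x,y)$ does not depend on $D$, so the minimization over $D \in \Dspace$ is vacuous and the problem reduces to the standard worst-case formulation~\eqref{eq:zs_game_minmax:row}, solved by any Nash row strategy with value $v_G$. Your added remarks on nonemptiness of $\Dspace$ and the contrast with~\eqref{eq:blp} are accurate but not needed.
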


\begin{proof}
Note that
\begin{equation}
\min_{x \in \rowspace, D \in \Dspace} \max_{y \in \colspace} v_G(x, y) = \min_{x \in \rowspace} \max_{y \in \colspace} v_G(x, y),
\end{equation}
since $D$ does not affect the payoff function $v_G(x, y)$, which depends solely on the true game $G$. Therefore, the optimal strategy is simply a Nash equilibrium of $G$, and any choice of $D \in \Dspace$ does not influence the outcome.
\end{proof}

\begin{remark}
\label{rmk:attacker_worst_case_ex}    
Although the victim lacks knowledge of the true payoffs in $G$, it remains possible that their chosen strategy $y$ coincides with a Nash equilibrium. Without assuming a specific decision-making model, the deceiver cannot eliminate this possibility. Consequently, a worst-case optimizing deceiver will always adopt a Nash strategy for $G$, effectively disregarding their deceptive capabilities.
\end{remark}
To go beyond the simplistic strategy described in Proposition~\ref{prop:attacker_worst_case}, we adopt a specific decision-making model for the victim, as in Assumption~\ref{asmp:victim_behavior}. According to Theorem~\ref{thm:victim_worst_case}, this model captures both trusting and robust behaviors.

The corresponding optimization problem is posed in~\eqref{eq:blp}, which we now reformulate as a single-level program that can be addressed using branch-and-bound techniques. Specifically, consider the following nonlinear, non-convex program:
\begin{subequations}
	\label{eq:blp:bilinear}
	\begin{align}
		\min_{\substack{x \in \rowspace, D \in \Dspace, \\ y \in \colspace, \omega \in \rowspace, \vPerceived \in \R}} & v_G(x, y), \\
		 \text{s.t. } & (G+D)y \ge \vPerceived,                   \\
		 & (G+D)^\top \victimDualAction \le \vPerceived.
	\end{align}
\end{subequations}

% \begin{algorithm}[tb]
% 	\caption{Exact Deceptive Strategy}
% 	\textbf{Require:} Branch-and-bound nonLP solver \textproc{BranchAndBound} \\
% 	\textbf{Input:} Zero-sum game $G$, deception magnitude bound $\Delta$
% 	\label{alg:smart_attack_strat_exact}
% 	\begin{algorithmic}[1]
% 		\State $v \gets \infty$
% 		\For{$i \in \{1, 2, \ldots, m\}$} \label{alg:smart_attack_strat_exact:loop}
% 		\State $D^\star, y^\star, \omega^\star, v^\star \gets$ \textproc{BranchAndBound} solve~\eqref{eq:blp:bilinear}
% 		\If{$v^\star < v$}
% 		\State $v \gets v^\star$
% 		\State $x \gets e_i$
% 		\EndIf
% 		\EndFor
% 		\State \Return $v, x$
% 	\end{algorithmic}
% \end{algorithm}

\begin{theorem} \label{thm:deceiver_exact}  Let $(x^\star, D^\star, y^\star, \victimDualAction^\star, \vPerceived^\star)$ be an optimal point of~\eqref{eq:blp:bilinear}. 
Then, the tuple $(x^\star, D^\star, y^\star)$ is also an optimal solution to the bi-level program~\eqref{eq:blp}. \end{theorem}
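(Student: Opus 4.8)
The plan is to show that the single-level program \eqref{eq:blp:bilinear} and the bi-level program \eqref{eq:blp} share the same optimal value and that an optimizer of the former projects onto an optimizer of the latter. Since the two objectives are the same function $v_G(x,y)$ and depend only on $(x,y)$—not on the auxiliary variables $\victimDualAction,\vPerceived$—it suffices to prove that the feasible region of \eqref{eq:blp}, namely $\{(x,D,y) : y \in \victimStrategies{G+D}\}$, coincides with the projection onto $(x,D,y)$ of the feasible region of \eqref{eq:blp:bilinear}. The key is to reinterpret the two matrix inequalities through the LP \eqref{eq:game_as_lp} applied to $\Gpert = G+D$: the constraint $(G+D)y \ge \vPerceived$ is exactly column-player (primal) feasibility, while $(G+D)^\top \victimDualAction \le \vPerceived$ with $\victimDualAction \in \rowspace$ is row-player (dual) feasibility, so minimax ties them together.

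First I would prove the forward inclusion. Take any feasible $(x, D, y, \victimDualAction, \vPerceived)$ of \eqref{eq:blp:bilinear}. From $(G+D)y \ge \vPerceived$ and $y \in \colspace$, every component satisfies $((G+D)y)_i \ge \vPerceived$, so $\vPerceived \le \min_i ((G+D)y)_i \le v_{\Gpert}$ by the definition of the security value in \eqref{eq:security_strategy_col}. From $(G+D)^\top \victimDualAction \le \vPerceived$ and $\victimDualAction \in \rowspace$, we get $\max_j ((G+D)^\top \victimDualAction)_j \le \vPerceived$; but since $v_{\Gpert} = \min_{x' \in \rowspace} \max_j ((G+D)^\top x')_j \le \max_j ((G+D)^\top \victimDualAction)_j$, it follows that $v_{\Gpert} \le \vPerceived$. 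Sandwiching the two bounds yields $\vPerceived = v_{\Gpert}$, and the first constraint then reads $(G+D)y \ge v_{\Gpert}$, i.e., $y \in \victimStrategies{G+D}$. Hence $(x,D,y)$ is feasible for \eqref{eq:blp}.

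For the reverse inclusion, given any feasible $(x,D,y)$ of \eqref{eq:blp}, I would set $\vPerceived := v_{\Gpert}$ and let $\victimDualAction$ be any row-player security strategy for $\Gpert$; by \eqref{eq:game_value} this $\victimDualAction$ satisfies $(G+D)^\top \victimDualAction \le v_{\Gpert} = \vPerceived$, while $y \in \victimStrategies{G+D}$ gives $(G+D)y \ge v_{\Gpert} = \vPerceived$, so the extended tuple is feasible for \eqref{eq:blp:bilinear}. Combining both inclusions, the feasible regions agree under projection, the optimal values coincide, and any optimizer $(x^\star, D^\star, y^\star, \victimDualAction^\star, \vPerceived^\star)$ of \eqref{eq:blp:bilinear} projects to an optimizer $(x^\star, D^\star, y^\star)$ of \eqref{eq:blp}. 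The main obstacle—the only step requiring more than bookkeeping—is recognizing that the auxiliary variable $\victimDualAction$ is precisely what pins $\vPerceived$ to the game value from above: without a row-feasible witness, $\vPerceived$ could be taken arbitrarily small and the inequality $(G+D)y \ge \vPerceived$ would fail to encode $y \in \victimStrategies{G+D}$. Establishing the two one-sided bounds on $\vPerceived$ via the minimax theorem applied to $\Gpert$ is therefore the heart of the argument.
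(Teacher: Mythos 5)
Your proof is correct and follows essentially the same route as the paper: both arguments recognize the two matrix inequalities in~\eqref{eq:blp:bilinear} as primal and dual feasibility for the lower-level LP~\eqref{eq:game_as_lp}, and use strong duality (in your case phrased via the minimax theorem, which is equivalent here) to pin $\vPerceived$ to $v_{\Gpert}$ and thereby certify lower-level optimality. Your write-up is in fact somewhat more explicit than the paper's, since you verify both inclusions of the projected feasible sets rather than simply citing the strong-duality equivalence.
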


\begin{proof} We demonstrate the equivalence of the two formulations by reducing~\eqref{eq:blp} to~\eqref{eq:blp:bilinear}.
First, recall that by the definition of $\victimStrategies{\Gpert}$, the bi-level program~\eqref{eq:blp} can be expanded as:

\begin{subequations} 
\label{eq:blp2}
\begin{alignat}{2}
    \min_{\substack{x \in \rowspace, D \in \Dspace, \\ y \in \colspace, \vPerceived \in \R}} & v_G(x, y), \label{eq:blp2_ul_obj} \\
    \text{s.t. } \quad & (y, \vPerceived) \in \argmax_{y \in \colspace, \vPerceived \in \R} \vPerceived, \label{eq:blp2_ll_obj} \\
    & \text{s.t. } \begin{bmatrix} -(G + D) & 1_{m \times 1} \end{bmatrix} \begin{bmatrix} y \\ \vPerceived \end{bmatrix} \le 0. \label{eq:blp2_ll_cons}
\end{alignat}
\end{subequations}

Next, we apply the principle of strong duality.  
The dual of the lower-level LP is:

\begin{subequations}
\label{eq:dual_deceiver_lp}
\begin{align}
    (\victimDualAction, \vDualPerceived) \in & \argmin_{\victimDualAction \in \rowspace, \vDualPerceived \in \R} \vDualPerceived \\
    & \text{s.t. } (G+D)^\top \victimDualAction \le \vDualPerceived.
\end{align}
\end{subequations}

Since both the primal and dual programs are linear, strong duality holds~\cite{boydConvexOptimization2004}.  
Therefore, a pair of feasible solutions $(y, \vPerceived)$ to~\eqref{eq:blp2_ll_obj}–\eqref{eq:blp2_ll_cons} and $(\victimDualAction, \vDualPerceived)$ to~\eqref{eq:dual_deceiver_lp} are optimal if and only if $\vPerceived = \vDualPerceived$.  
This equivalence allows us to reduce~\eqref{eq:blp2} to the single-level formulation~\eqref{eq:blp:bilinear}, completing the proof.
\end{proof}

\begin{remark}
\label{rmk:deceiver_exact_ex}    

Theorem~\ref{thm:deceiver_exact} provides an explicit reduction of the bilevel program - used to compute the optimal deceptive strategy against a victim whose behavior satisfies Assumption~\ref{asmp:victim_behavior} - to a single-level.
While established optimization software can solve such problems,~\eqref{eq:blp:bilinear} remains computationally challenging.
Specifically, the bilinear and nonconvex nature of the constraints requires the use of branch-and-bound procedures to obtain a global solution~\cite{mccormickComputabilityGlobalSolutions1976}.
Due to branching, the complexity of these methods grows exponentially with the number of decision variables, rendering~\eqref{eq:blp:bilinear} impractical for real-world scenarios involving large action spaces for both the leader and the follower.
In practice, we solve~\eqref{eq:blp:bilinear} for smaller instances using the Gurobi optimization package~\cite{gurobi}.
In Section~\ref{sec:examples}, we apply~\eqref{eq:blp:bilinear} to compute optimal deception strategies for example games and report the corresponding computation times.

\end{remark}
% In practice, we solve~\eqref{eq:blp:bilinear} using the optimization package gurobi~\cite{gurobi}, which handles the bilinear constraints with an iterative branch-and-bound procedure such as~\cite{mccormickComputabilityGlobalSolutions1976}.
% % At each iteration, the domains of the decision variables are subdivided, computationally feasible relaxations of the problem are computed on each divided domain, and the region containing the optimizer is identified.
% % This region can be further divided, causing the relaxations to grow closer to the true problem in later iterations.
% % Though this method is capable of solving very general optimization problems, it is computationally expensive and does not scale well to larger problem sizes.

\subsection{Feasible Deception Derivation}
\label{subsec:deceiver_results_feas}

In order to combat the computational inefficiencies of branch-and-bound methods, we also present a scalable algorithm to identify a feasible point of~\eqref{eq:blp}.
Empirically, this feasible point also improves the deceiver's utility compared to the base zero-sum case, which we demonstrate in Section~\ref{sec:examples}.

We introduce additional notation to clarify the intuition behind this algorithm. 
First, fix $v \in \R$ and define the \emph{victim sub-rational solution set} 
\begin{equation}
    \label{eq:victimSubrationalBehavior}
    \victimStrategiesSubrational{\Gpert}{v} := \{ y \in \colspace \mid \Gpert y \ge v \}.
\end{equation}
Note that $\victimStrategiesSubrational{\Gpert}{v}$ generalizes $\victimStrategies{\Gpert}$ from Assumption~\ref{asmp:victim_behavior} in the sense that $\victimStrategiesSubrational{\Gpert}{v_\Gpert} = \victimStrategies{\Gpert}$.
For $v < v_\Gpert$, we have $\victimStrategies{\Gpert} \subseteq \victimStrategiesSubrational{\Gpert}{v}$.
In context, this relaxes the rationality requirement on the victim; instead of choosing an exact best response, she may make any choice guaranteeing herself at least $v$ utility, extending the optimism described in Remark~\ref{rmk:optimism}.
We make this relaxation because this simplified model of victim behavior makes computing an optimal deception much easier.
The lower level problem of~\eqref{eq:blp} is relaxed from a LP to just a system of linear constraints, reducing the overall program to a single level. 
We emphasize that this sub-rationality is purely an intermediate device -- it is used as a tool to identify a feasible point of~\eqref{eq:blp}, where the victim is fully rational. 

\begin{remark}
    \label{rmk:blp_to_lp}

    The sub-rational solution set~\eqref{eq:victimSubrationalBehavior} is used to relax~\eqref{eq:blp} by replacing $\victimStrategies{\Gpert}$ with $\victimStrategiesSubrational{\Gpert}{v}$.
    In~\eqref{eq:blp2}, this corresponds to fixing a particular $\vPerceived = v$. 
    Since this forces a constant lower-level objective, the resulting problem can then be simplified to a single level. 
    Furthermore, the remaining bi-linear terms ($x^\top G y$ in~\eqref{eq:blp2_ul_obj} and $Dy$ in~\eqref{eq:blp2_ll_cons}) can be removed through additional simplification, making the problem fully linear. 
    First, notice that for any $G \in \Gspace$, $x \in \rowspace$, and $y \in \colspace$, $\min_{1 \le i \le m} e_i^\top Gy \le x^\top G y$ by convexity. 
    Therefore, an equivalent approach is to solve the $m$ problems with linear objective $e_i^\top G y$ and simply keep the best solution. 
    Next, consider any $D \in \Dspace$ and $y \in \colspace$; by~\eqref{eq:simplex} and~\eqref{eq:norms:one_norm}, $\norm{Dy}_1 \le \Delta$. 
    Thus, consider any $d \in \R^m$ with $\norm{d}_1 \le \Delta$ and parameterize $D = \begin{bmatrix} d & d & \cdots & d \end{bmatrix}$.
    For all $y \in \colspace$, we have $Dy = d$, which means that we can directly use $d$ as a decision variable, eliminating the product $Dy$.
    We later exploit these simplifications in Algorithm~\ref{alg:deceiver_feas} to compute relaxations that iteratively approach the feasible set of~\eqref{eq:blp}.
\end{remark}

% However, if $v$ can somehow be chosen to be equivalent to the payoff of an exactly rational victim, then this is no longer a concern.
% However, if $\Gpert$ is chosen such that $v_{\Gpert} = v$, then the only victim behaviors that can guarantee a payoff of $v$ are Nash strategies.
% Thus, solving~\eqref{eq:blp:conv_decomp} under this additional constraint gives a feasible point of~\eqref{eq:blp}.

% For a fixed $\vIndMax$, consider the problem
% \begin{subequations}
% 	\label{eq:sliced_blp}
% 	\begin{align}
% 		\min_{D \in \Dspace, y \in \colspace} & v_G(e_i, y)                                                                                                                        \\
% 		\text{s.t. }                          & \begin{bmatrix} -(G + D) & 1_{m \times 1} \end{bmatrix} \begin{bmatrix} y \\ \vIndMax \end{bmatrix} \le 0. \label{eq:sliced_blp:cons}
% 	\end{align}
% \end{subequations}

Note that if $v > v_\Gpert$, then $\victimStrategiesSubrational{\Gpert}{v} = \emptyset$ (lest $v$ be a larger objective in~\eqref{eq:game_as_lp}).
We define the set of \emph{inducible values} for a game $G$ as those that have non-empty sub-rational solution sets under some stealthy deception: 
\begin{equation}
    \label{eq:v_inducible}
    \Vind{G} := \{ v \in \R \mid \exists~D \in \Dspace \text{ s.t. } \victimStrategiesSubrational{G+D}{v} \ne \emptyset \}.
\end{equation}
We shall show that $\Vind{G}$ lets us identify particular parameters causing the easy-to-calculate behaviors of~\eqref{eq:victimSubrationalBehavior} to coincide with the fully-rational behaviors of Assumption~\ref{asmp:victim_behavior}. 
Our first lemma towards this result is that $\Vind{G}$ is downward closed.
\begin{lemma}
	\label{lemma:downward_closed}
    Let $G \in \Gspace$ be a zero-sum game, and define $\Vind{G}$ as in~\eqref{eq:v_inducible}.  
	Then, $\Vind{G}$ is non-empty and for any $v \in \Vind{G}$, if $v' < v$, then $v' \in \Vind{G}$.
\end{lemma}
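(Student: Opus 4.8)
The plan is to treat the two assertions separately, starting with non-emptiness, which follows from an explicit witness. I would take the trivial deception $D = 0$, which lies in $\Dspace$ since $\norm{0}_1 = 0 \le \Dmag$, together with any fixed strategy $y \in \colspace$. The vector $Gy \in \R^m$ attains a minimum entry $\underline{v} := \min_{1 \le i \le m} (Gy)_i$, and by construction $Gy \ge \underline{v}$ holds componentwise, so $y \in \victimStrategiesSubrational{G}{\underline{v}}$. Hence $\victimStrategiesSubrational{G}{\underline{v}} \ne \emptyset$, which gives $\underline{v} \in \Vind{G}$ and establishes that $\Vind{G}$ is non-empty. More tightly, with $D = 0$ the column-player Nash (security) strategy $\colNE$ satisfies $G \colNE \ge v_G$ by the LP constraint in~\eqref{eq:game_as_lp} evaluated at its optimum, so in fact $v_G \in \Vind{G}$ -- a fact that will be convenient when relating $\Vind{G}$ to the game value later.

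For the downward-closure property, the key observation is that the defining constraint $\Gpert y \ge v$ is monotone in $v$: lowering the threshold can only enlarge the feasible set. Concretely, suppose $v \in \Vind{G}$, witnessed by some $D \in \Dspace$ and some $y \in \victimStrategiesSubrational{G+D}{v}$, meaning $(G+D)y \ge v$ componentwise. For any $v' < v$ I immediately obtain $(G+D)y \ge v > v'$, so the very same pair satisfies $y \in \victimStrategiesSubrational{G+D}{v'}$. Thus $\victimStrategiesSubrational{G+D}{v'} \ne \emptyset$, and the identical witness $(D, y)$ certifies $v' \in \Vind{G}$.

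There is no substantive obstacle here: the statement is a direct consequence of the monotone structure of the threshold constraint, and the reuse of the witness $(D, y)$ for every smaller value is immediate. The only point requiring care is the interpretation of the inequality $\Gpert y \ge v$ as the componentwise comparison $\Gpert y \ge v \, 1_{m \times 1}$, consistent with the convention used in Assumption~\ref{asmp:victim_behavior}; once this is fixed, both claims reduce to elementary observations.
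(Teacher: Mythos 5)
Your proof is correct and follows essentially the same route as the paper: an explicit witness for non-emptiness and reuse of the witness pair $(D, y)$ under the monotone threshold constraint for downward closure. The only cosmetic difference is the choice of witness (you use $D = 0$ with the minimum entry of $Gy$, whereas the paper uses the uniform bound $\vmin = \min_{i,j} G_{ij} - \Delta$, which makes the entire simplex feasible); both are valid.
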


\begin{proof}
	Let $\vmin := \min_{1 \le i \le m, 1 \le j \le n} G_{ij} - \Delta$.
    By~\eqref{eq:bounded_deception} and \eqref{eq:norms:one_norm}, $\min_{1 \le i \le n} (( G + D )y)_{i} \ge \vmin$ for any $D \in \Dspace$ and $y \in \colspace$.
	Thus, $\victimStrategiesSubrational{G+D}{\vmin} = \colspace$ and $\vmin \in \Vind{G}$.

	Consider any $v\in \Vind{G}$ and after considering ~\eqref{eq:victimSubrationalBehavior} and~\eqref{eq:v_inducible}, there exists a tuple $(D, y)$ with $\min_{1 \le i \le n} ((G + D)y)_{i} \ge v > v'$.
    Thus, $\victimStrategiesSubrational{G+D}{v} \subseteq \victimStrategiesSubrational{G+D}{v'}$.
    Therefore, $\victimStrategiesSubrational{G+D}{v'}$ is non-empty, and so $v' \in \Vind{G}$.
	% Transitivity gives that $\min_{1 \le i \le n} ((G + D^{*})y^{*})_{i} > {\vIndMax}'$, satisfying~\eqref{eq:sliced_blp:cons} for ${\vIndMax}'$.
	% Thus, $(D^{*}, y^{*})$ is a feasible point of~\eqref{eq:sliced_blp} with ${\vIndMax}'$, meaning an optimal point must exist, and so ${\vIndMax}' \in \Vind$.
\end{proof}

Using Lemma~\ref{lemma:downward_closed} and a compactness argument, we now show that $\Vind{G}$ attains a maximum.

\begin{lemma}
	\label{lemma:inducible_maximum}
    Let $G \in \Gspace$ be a zero-sum game, and let $\Vind{G}$ as in~\eqref{eq:v_inducible}.  
    Further define $\vIndMax := \sup \Vind{G}$. 
    Then, $\vIndMax \in \Vind{G}$.
\end{lemma}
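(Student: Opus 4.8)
The plan is to recognize $\vIndMax$ as the maximum of a single continuous function over a compact set, so that the supremum is automatically attained. To this end, define $f : \Dspace \times \colspace \to \R$ by $f(D, y) := \min_{1 \le i \le m} \big((G+D)y\big)_i$. Unwinding~\eqref{eq:victimSubrationalBehavior} and~\eqref{eq:v_inducible}, the set $\victimStrategiesSubrational{G+D}{v}$ is non-empty precisely when some $y \in \colspace$ satisfies $(G+D)y \ge v$, i.e.\ $f(D,y) \ge v$. Hence $v \in \Vind{G}$ if and only if there exists a pair $(D,y) \in \Dspace \times \colspace$ with $f(D,y) \ge v$. This rewriting is the conceptual crux: it turns the existential quantifier over $D$ and $y$ hidden inside $\Vind{G}$ into a joint optimization.

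Next I would establish that $f$ attains a maximum on its domain. The set $\colspace$ is the compact probability simplex, and $\Dspace = \{D \in \Gspace : \norm{D}_1 \le \Dmag\}$ is a closed ball in the operator $1$-norm, hence a closed and bounded---and therefore compact---subset of the finite-dimensional space $\Gspace$; their product is compact. The map $(D,y) \mapsto (G+D)y$ is bilinear and thus continuous, so each coordinate $(D,y) \mapsto \big((G+D)y\big)_i$ is continuous and $f$, being the pointwise minimum of finitely many continuous functions, is continuous as well. By the extreme value theorem, $f$ attains a maximum at some $(D^\star, y^\star)$; write $M := f(D^\star, y^\star)$.

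Finally I would show $\vIndMax = M$ and that this value is attained. Since $f(D,y) \le M$ for every $(D,y) \in \Dspace \times \colspace$, any $v \in \Vind{G}$ has a witness obeying $v \le f(D,y) \le M$, so $M$ is an upper bound and $\vIndMax \le M$. Conversely, $y^\star \in \victimStrategiesSubrational{G+D^\star}{M}$ by the definition of $M$, so $\victimStrategiesSubrational{G+D^\star}{M} \ne \emptyset$ and $M \in \Vind{G}$, whence $M \le \vIndMax$. Combining, $\vIndMax = M \in \Vind{G}$, as required; Lemma~\ref{lemma:downward_closed} ensures $\Vind{G}$ is non-empty so that $\vIndMax$ is a well-defined real number. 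The only genuine subtlety is the compactness of $\Dspace$: one must confirm that the $1$-norm budget set is closed and bounded in $\Gspace$ so that the extreme value theorem applies jointly in $(D,y)$. Once this is in hand, the reduction to maximizing a single continuous function makes the attainment immediate.
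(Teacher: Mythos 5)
Your proof is correct. It takes a cleaner route to the same destination: where the paper fixes a sequence $v^{i} \uparrow \vIndMax$, invokes Lemma~\ref{lemma:downward_closed} to get witnesses $(D^{i}, y^{i})$ for each term, and extracts convergent subsequences via Bolzano--Weierstrass before passing to the limit, you observe that $\Vind{G}$ is exactly the set of values $v$ for which the continuous function $f(D,y) = \min_{1 \le i \le m} ((G+D)y)_i$ exceeds $v$ somewhere on the compact set $\Dspace \times \colspace$, so that $\vIndMax = \max f$ and attainment follows from the extreme value theorem in one stroke. The underlying topology is identical (compactness of $\Dspace \times \colspace$ plus continuity of $(D,y) \mapsto (G+D)y$), but your packaging buys more: it shows $\Vind{G} = (-\infty, \max f]$ directly, which recovers the boundedness argument the paper carries out separately with its explicit $\vmax$, and also subsumes the downward-closedness of Lemma~\ref{lemma:downward_closed}. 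You correctly flag the one point that needs care, namely that the operator-$1$-norm ball $\Dspace$ is closed and bounded in the finite-dimensional space $\Gspace$ and hence compact; with that in hand the argument is complete. A minor remark: you do not actually need Lemma~\ref{lemma:downward_closed} for non-emptiness, since $M \in \Vind{G}$ already establishes it.
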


\begin{proof}
	First, we shall show that $\Vind{G}$ is bounded above.
	Let $\vmax := \max_{1 \le i \le m, 1 \le j \le n} G_{ij} + \Delta + 1$.
	By~\eqref{eq:bounded_deception}, \eqref{eq:norms:one_norm} and the fact that $y \in \colspace$, for any $D \in \Dspace$ we must have $\min_{1 \le i \le n} ((G+D) y)_{i} < \vmax$.
    But then, $\victimStrategiesSubrational{G+D}{\vmax} = \emptyset$ and $\vmax \not \in \Vind{G}$.
	% But this implies that the constraint~\eqref{eq:sliced_blp:cons} can never be satisfied with $\vIndMax = \vmax$, and so $\vmax \not \in \Vind$.
	% Since $u$ was arbitrarily large, $\Vind$ is bounded above, and so $\vIndMax := \sup_{v \in \Vind} v < \infty$.
    Thus, $\vmax$ upper bounds $\Vind{G}$ by Lemma~\ref{lemma:downward_closed}, so $\vIndMax < \infty$.

	Let $\{ v^{i} \}_{i=1}^{\infty}$ be any sequence approaching $\vIndMax$ from below.
	Note that each $v^{i} \in \Vind{G}$, lest $v^{i} < \vIndMax$ be an upper bound for $\Vind{G}$ by the contrapositive of Lemma~\ref{lemma:downward_closed}.
	By~\eqref{eq:victimSubrationalBehavior} and~\eqref{eq:v_inducible}, there exist $\{ D^{i} \}_{i=1}^{\infty}$ and $\{ y^{i} \}_{i=1}^{\infty}$ such that $D^{i} \in \Dspace$, $y^{i} \in \colspace$, and $\min_{1 \le i \le n} ((G + D^{i})y^{i})_{i} \ge v^{i}$ for all $i$.
	Since both $\Dspace$ and $\colspace$ are bounded, by the Bolzano-Weierstrass theorem, there exist convergent subsequences $\{ D^{i^{k}} \}_{k=1}^{\infty}$ and $\{ y^{i^{k}} \}_{k=1}^{\infty}$; call their limits $\Dlim$ and $\ylim$, respectively.
    By closure, $\Dlim \in \Dspace$ and $\ylim \in \colspace$.
	Then, by properties of limits
	\begin{equation*}
		(G+\Dlim)\ylim = \lim_{k \to \infty} (G + D^{i^{k}})y^{i^{k}} \ge \lim_{k \to \infty} v^{i^{k}} = \vIndMax.
	\end{equation*}
    Thus, $\ylim \in \victimStrategiesSubrational{G+\Dlim}{\vIndMax}$, so $\vIndMax \in \Vind{G}$ as claimed.
	% Thus, $(\Dlim, \ylim)$ is a feasible point of~\eqref{eq:sliced_blp} for $\vIndMax$, meaning $\vIndMax \in \Vind$.
	% Since $\vIndMax = \sup_{v \in \Vind} v$, it also upper bounds $\Vind$, completing the proof.
\end{proof}

Finally, we derive the key property of $\vIndMax$: any ``sub-rational'' strategy achieving at least $\vIndMax$ utility in $\Gpert$ is fully rational for the deceptive game.

\begin{lemma}
	\label{lemma:rational_point}
    Let $G \in \Gspace$ be a zero-sum game, define $\Vind{G}$ as in~\eqref{eq:v_inducible}, and let $\vIndMax = \sup \Vind{{G}}$.
    Then, for any $D \in \Dspace$ and $y \in \colspace$, $y \in \victimStrategiesSubrational{G+D}{\vIndMax} \implies y \in \victimStrategies{G+D}$.
	% Let $\vIndMax$ be as in Lemma~\ref{lemma:inducible_maximum}.
	% Then, for any $D \in \Dspace$ and $y \in \colspace$ such that $(G + D)y \ge \vIndMax$, we have that $y$ is a Nash equilibrium strategy for the game $\Gpert = G + D$.
\end{lemma}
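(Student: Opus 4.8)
The plan is to show that the hypothesis $y \in \victimStrategiesSubrational{G+D}{\vIndMax}$ is already strong enough to force the defining inequality of $\victimStrategies{G+D}$. By the definition of the sub-rational set~\eqref{eq:victimSubrationalBehavior}, membership $y \in \victimStrategiesSubrational{G+D}{\vIndMax}$ means exactly $(G+D)y \ge \vIndMax$, while by Assumption~\ref{asmp:victim_behavior} we have $\victimStrategies{G+D} = \{ y \in \colspace \mid (G+D)y \ge v_{G+D} \}$. Hence the entire claim reduces to the single scalar inequality $\vIndMax \ge v_{G+D}$: once this is in hand, $(G+D)y \ge \vIndMax \ge v_{G+D}$ gives $y \in \victimStrategies{G+D}$ immediately.

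The key step is therefore to observe that the game value of the \emph{specific} perturbed game $G+D$ is itself an inducible value. To see this, let $y^\star$ be a column-player security policy for $G+D$. By the LP characterization~\eqref{eq:game_as_lp}—equivalently, by the defining inequality of $\victimStrategies{G+D}$ in Assumption~\ref{asmp:victim_behavior}—this policy satisfies $(G+D)y^\star \ge v_{G+D}$. Consequently $y^\star \in \victimStrategiesSubrational{G+D}{v_{G+D}}$, so this sub-rational set is non-empty; and since the deception $D$ in the statement lies in $\Dspace$, definition~\eqref{eq:v_inducible} places $v_{G+D} \in \Vind{G}$.

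Because $\vIndMax = \sup \Vind{G}$ is an upper bound for the whole set, we conclude $v_{G+D} \le \vIndMax$, which is precisely the scalar inequality needed, and the proof closes as in the first paragraph.

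I do not anticipate a serious obstacle here; the lemma follows almost mechanically once one recognizes that $\vIndMax$ is a supremum taken over \emph{all} admissible deceptions $D' \in \Dspace$, and therefore dominates the value of every individual perturbed game $G+D'$. The one point that deserves care is that this domination applies to the particular $D$ appearing in the statement precisely because $D$ itself belongs to $\Dspace$, so that $v_{G+D}$ contributes to $\Vind{G}$; it is this uniform-over-$\Dspace$ character of the supremum that makes the easy-to-achieve threshold $\vIndMax$ strong enough to certify full rationality for any $\Gpert = G+D$. Note that only the upper-bound property of the supremum is used, so Lemma~\ref{lemma:inducible_maximum} (attainment of $\vIndMax \in \Vind{G}$) is not actually required for this argument.
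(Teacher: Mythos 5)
Your proof is correct and rests on the same key observation as the paper's: a security policy of $G+D$ witnesses $v_{G+D} \in \Vind{G}$, hence $v_{G+D} \le \vIndMax$, and the implication follows from $(G+D)y \ge \vIndMax \ge v_{G+D}$. The paper packages this as a proof by contradiction whereas you argue directly (and correctly note that only the upper-bound property of the supremum is needed, not its attainment via Lemma~\ref{lemma:inducible_maximum}), but the substance is identical.
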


\begin{proof}
	Assume by contradiction that $y \not \in \victimStrategies{G+D}$.
	Then, by~\eqref{eq:game_as_lp}, there exists a $y'$ with $v' := \min_{1 \le i \le m} ((G+D) y')_{i} > \vIndMax$.
    But this implies that $y' \in \victimStrategiesSubrational{G+D}{v'}$, and so $v' \in \Vind{G}$.
	% But this implies that $(D, y')$ is a feasible point of~\eqref{eq:sliced_blp} for $\min_{1 \le i \le n} (\Gpert y')_{i}$, and so $\min_{1 \le i \le n} (\Gpert y')_{i} \in \Vind$.
	This contradicts the fact that $\vIndMax$ is the maximum element of $\Vind{G}$, completing the proof.
\end{proof}

These Lemmas establish the framework needed for Algorithm~\ref{alg:deceiver_feas}.
Our main goal is to identify the $\vIndMax$ guaranteed to exist by Lemma~\ref{lemma:inducible_maximum}, which we accomplish via a binary search.
The techniques of Remark~\ref{rmk:blp_to_lp} are used to solve the problem on Line~\ref{alg:deceiver_feas:lp} as a LP.

\begin{algorithm}[tb]
	\caption{Feasible Deceptive Strategy}
	% \textbf{Require:} LP solver \textproc{LinProg} \\
	\textbf{Input:} Zero-sum game $G \in \Gspace$, deception magnitude bound $\Delta \in \R^+$, binary search tolerance $\binSearchTol \in \R^+$
	\label{alg:deceiver_feas}
	\begin{algorithmic}[1]
		\State $\vbest \gets \infty$
		\State $\vmin \gets \min_{1 \le i \le m, 1 \le j \le n} G_{ij} - \Delta$
		\State $\vmax \gets \max_{1 \le i \le m, 1 \le j \le n} G_{ij} + \Delta$
		\For{$i \in \{1, 2, \ldots, m\}$}
		\While{$\vmax - \vmin > \binSearchTol$}
		\State $\vcurr \gets \frac{ \vmax + \vmin }{2}$
		\If{$\vcurr \in \Vind{G}$}
		\State $\Dcurr, \colCurr \gets \argmin\limits_{D \in \Dspace, y \in \colspace} e_i^\top G y$ s.t. $y \in \victimStrategiesSubrational{G+D}{\vcurr}$ \label{alg:deceiver_feas:lp}
		\State $\vmin \gets \vcurr$
		\Else
		\State $\vmax \gets \vcurr$
		\EndIf
		\EndWhile
		\If{$e_{i}^\top G \colCurr < \vbest$} \Comment{Update best solution}
		\State $\vbest \gets e_{i}^\top G \colCurr$
		\State $\rowIntelligentFeas \gets e_{i}$
        \State $\deceptionIntelligentFeas \gets \Dcurr$
        \State $\colIntelligentFeas \gets \colCurr$
		\EndIf
		\EndFor
		\Return $\vbest, \rowIntelligentFeas, \deceptionIntelligentFeas, \colIntelligentFeas$
	\end{algorithmic}
\end{algorithm}

\begin{theorem}
\label{thm:deceiver_feas}
    For any zero-sum game $G \in \Gspace$, let $\{ \binSearchTol^i \}_{i=1}^\infty$ be any sequence approaching 0 from above. 
    Define the corresponding sequence of strategies $\binSearchOutSeq = \{ (\rowIntelligentFeas^i, \deceptionIntelligentFeas^i, \colIntelligentFeas^i) \}_{i=1}^{\infty}$ by the outputs of Algorithm~\ref{alg:deceiver_feas} for each binary search tolerance $\binSearchTol^i$. 
    Then, $\binSearchOutSeq$ converges to a feasible point of~\eqref{eq:blp}.
	% As $\delta \to 0$, the output of Algorithm~\ref{alg:deceiver_feas} converges to a feasible point of~\eqref{eq:blp}.
\end{theorem}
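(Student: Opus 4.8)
The plan is to show that, for each tolerance $\binSearchTol^i$, Algorithm~\ref{alg:deceiver_feas} returns a triple solving the sub-rational relaxation~\eqref{eq:victimSubrationalBehavior} at a threshold $v^{(i)}$ converging to $\vIndMax$ from below, and then to pass to the limit, invoking Lemma~\ref{lemma:rational_point} to upgrade sub-rational feasibility at $\vIndMax$ into genuine feasibility for~\eqref{eq:blp}. The essential point is that for a \emph{positive} tolerance the output need not be feasible for~\eqref{eq:blp}, since Lemma~\ref{lemma:rational_point} supplies the containment $\victimStrategiesSubrational{G+D}{v} \subseteq \victimStrategies{G+D}$ only at the exact value $v = \vIndMax$; feasibility therefore materializes purely in the limit.

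First I would verify that the binary search brackets $\vIndMax$. Taking $D = 0$ shows $\min_{i,j} G_{ij} \in \Vind{G}$, so $\vIndMax \ge \min_{i,j} G_{ij} > \min_{i,j} G_{ij} - \Delta = \vmin$ at initialization, while the entrywise bound $\min_{1 \le k \le m}((G+D)y)_k \le \max_{i,j} G_{ij} + \Delta = \vmax$ (cf. Remark~\ref{rmk:blp_to_lp} and Lemma~\ref{lemma:inducible_maximum}) gives $\vIndMax \le \vmax$. I would then show that $\vmin \le \vIndMax \le \vmax$ is a loop invariant: since $\vIndMax \in \Vind{G}$ (Lemma~\ref{lemma:inducible_maximum}) and $\Vind{G}$ is downward closed (Lemma~\ref{lemma:downward_closed}), a tested value $\vcurr$ lies in $\Vind{G}$ if and only if $\vcurr \le \vIndMax$, so the update rule preserves the invariant. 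At termination $\vmax - \vmin \le \binSearchTol^i$, whence the threshold $v^{(i)}$ of the returned LP solution—equal to the last successful value of $\vmin$—satisfies $\vIndMax - \binSearchTol^i \le v^{(i)} \le \vIndMax$, and in particular $v^{(i)} \to \vIndMax$. For $\binSearchTol^i < \vIndMax - \vmin$ this also forces $\vmin$ to have strictly increased, so at least one successful test occurred and the returned quantities are well defined; since only the tail $i \to \infty$ matters, this suffices.

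Writing the output as $(x^{(i)}, D^{(i)}, y^{(i)}) := (\rowIntelligentFeas^i, \deceptionIntelligentFeas^i, \colIntelligentFeas^i)$, Line~\ref{alg:deceiver_feas:lp}—a genuine LP by the reformulation of Remark~\ref{rmk:blp_to_lp}—yields $x^{(i)} = e_j$ for some row index $j$, $D^{(i)} \in \Dspace$, $y^{(i)} \in \colspace$, and $(G + D^{(i)}) y^{(i)} \ge v^{(i)}$, i.e. $y^{(i)} \in \victimStrategiesSubrational{G + D^{(i)}}{v^{(i)}}$. Because the iterates lie in the compact set $\{ e_1, \ldots, e_m \} \times \Dspace \times \colspace$, the Bolzano-Weierstrass theorem yields a subsequence along which $x^{(i)}$ is constant, equal to some $e_j$, and $(D^{(i)}, y^{(i)})$ converges to a limit $(\overline D, \overline y) \in \Dspace \times \colspace$ (both sets being closed). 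Passing to the limit in $(G + D^{(i)}) y^{(i)} \ge v^{(i)}$, using continuity of the bilinear map $(D, y) \mapsto (G + D)y$ together with $v^{(i)} \to \vIndMax$, gives $(G + \overline D)\overline y \ge \vIndMax$, that is $\overline y \in \victimStrategiesSubrational{G + \overline D}{\vIndMax}$. Lemma~\ref{lemma:rational_point} then promotes this to $\overline y \in \victimStrategies{G + \overline D}$, so the limit $(e_j, \overline D, \overline y)$ is feasible for~\eqref{eq:blp}; the same reasoning shows every accumulation point of $\binSearchOutSeq$ is feasible.

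I expect the main obstacle to be the interface between the finite-tolerance relaxation and exact feasibility: one must control the returned threshold $v^{(i)}$ via the bracketing invariant while simultaneously confining the LP solutions to a compact set, so that Lemma~\ref{lemma:rational_point} may be applied \emph{exactly} at the limit value $\vIndMax$. A secondary subtlety is that the lower-level LP solutions need not be unique, so ``converges'' is most safely read as convergence along a subsequence—equivalently, as the statement that all accumulation points of $\binSearchOutSeq$ are feasible—mirroring the compactness-and-limit argument already used to prove Lemma~\ref{lemma:inducible_maximum}.
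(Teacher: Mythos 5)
Your proof is correct, and its analytical core differs from the paper's. Both arguments share the same skeleton --- establish that the binary search returns a point of $\victimStrategiesSubrational{G+D^{(i)}}{v^{(i)}}$ with $v^{(i)} \in [\vIndMax - \binSearchTol^i, \vIndMax]$, then take a limit and invoke Lemma~\ref{lemma:rational_point} to upgrade sub-rational feasibility at $\vIndMax$ to membership in $\victimStrategies{G+\overline{D}}$ --- but the limiting step is handled differently. The paper views the algorithm's output as the optimizer of a $\theta$-parametrized family of LPs~\eqref{eq:blp_bin_tol} and applies Berge's maximum theorem to get upper hemi-continuity of the argmin correspondence, so that the outputs converge to the solution set of the restricted problem~\eqref{eq:blp_fixed_sup}; this buys slightly more than the theorem states, namely that the limit points are \emph{optimal} among deceptions inducing the maximal value $\vIndMax$, not merely feasible for~\eqref{eq:blp}. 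Your argument instead extracts a convergent subsequence in the compact set $\{e_1,\dots,e_m\} \times \Dspace \times \colspace$ via Bolzano--Weierstrass and passes to the limit in the constraint $(G+D^{(i)})y^{(i)} \ge v^{(i)}$ directly, which is more elementary (no Berge, no continuity/compact-valuedness of a correspondence to verify) and exactly matches the claimed conclusion. You also make explicit two points the paper glosses over: the bracketing invariant $\vmin \le \vIndMax \le \vmax$ that justifies $v^{(i)} \to \vIndMax$, and the fact that for small enough $\binSearchTol^i$ at least one feasibility test succeeds so the returned quantities are well defined. Your caveat that ``converges'' should be read as convergence of accumulation points is fair, and applies equally to the paper's own statement, since neither argument establishes uniqueness of the limit.
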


\begin{proof}
    Let $\vIndMax = \sup \Vind{G}$. 
    Consider~\eqref{eq:blp} with the additional constraint that $v_{G+D} = \vIndMax$; by Lemma~\ref{lemma:inducible_maximum} the restricted problem is still feasible, and by~\eqref{eq:game_as_lp}, can be written (as with~\eqref{eq:blp2})
	\begin{subequations} \label{eq:blp_fixed_sup}
	\begin{align}
		\min_{\substack{x \in \rowspace, D \in \Dspace,                                                                                                   \\ y \in \colspace}} & v_G(x, y),  \label{eq:blp_fixed_sup:obj}                                                                                                       \\
		\text{s.t. } & \begin{bmatrix} -(G + D) & 1_{m \times 1} \end{bmatrix} \begin{bmatrix} y \\ \vIndMax \end{bmatrix} \le 0. \label{eq:blp_fixed_sup:rationality} \\
		             & v_{G+D} = \vIndMax \label{eq:blp_fixed_sup:deception_cons}
	\end{align}
	\end{subequations}
    By Lemma~\ref{lemma:rational_point}, \eqref{eq:blp_fixed_sup:deception_cons} is subsumed by~\eqref{eq:blp_fixed_sup:rationality} and can be ignored. 

    % The main challenge of solving~\eqref{eq:blp_fixed_sup} is finding $\vIndMax$. 
    Using a binary search, Algorithm~\ref{alg:deceiver_feas} finds
	\begin{subequations} \label{eq:blp_bin_tol}
	\begin{align}
		(\rowIntelligentFeas^i, \deceptionIntelligentFeas^i, \colIntelligentFeas^i) \in \argmin_{\substack{x \in \rowspace, d \in \R^m,                                                                                                   \\ y \in \colspace}} & v_G(x, y),  \label{eq:blp_bin_tol:obj}                                                                                                       \\
    \text{s.t. } & Gy + d \ge (\vIndMax - \theta^i)_m, \label{eq:blp_bin_tol:rationality} \\ 
    & \norm{d}_1 \le \Delta.
	\end{align}
	\end{subequations}
    for some $\theta^i \in [0, \binSearchTol^i]$. 
    Let $C(\theta)$ be the correspondence of the set of feasible points in~\eqref{eq:blp_bin_tol} for each $\theta^i$; $C(\theta)$ is defined by linear constraints and continuous by standard arguments. 
    Furthermore, it is the product of closed subsets of compact domains, and so compact-valued. 
    Finally, since~\eqref{eq:blp_bin_tol} is a relaxation of~\eqref{eq:blp_fixed_sup}, which is feasible as above, $C(\theta)$ is non-empty valued. 
    Therefore, Berge's maximum theorem implies that the set of optimizers of~\eqref{eq:blp_bin_tol} is upper hemi-continuous. 
    Since $\{\binSearchTol^i\}_{i=1}^\infty \to 0$, also $\{ \theta^i \}_{i=1}^\infty \to 0$, and so $\{ (\rowIntelligentFeas^i, \deceptionIntelligentFeas^i, \colIntelligentFeas^i) \}_{i=1}^\infty$ converges to the solution set of~\eqref{eq:blp_fixed_sup}. 
    As above, solutions to~\eqref{eq:blp_fixed_sup} are feasible points of~\eqref{eq:blp}, completing the proof. 
\end{proof}

\begin{remark}
Using the sub-rationality tool~\eqref{eq:victimSubrationalBehavior}, Theorem~\ref{thm:deceiver_feas} provides an alternative way to compute a well-performing deception that is more efficient than solving~\eqref{eq:blp:bilinear}. 
As $\binSearchTol \to 0$, the ``sub-rational'' solution point approaches a point where the victim behaves rationally. 
Therefore, in practice, Algorithm~\ref{alg:deceiver_feas} may be run once with a suitably small $\delta$ to compute a deception that treats the victim as effectively rational. 
Since a binary search is used, $\delta$ may be made very small without excessively increasing the computation required. 
Alternatively, once a deception is computed with a known search tolerance, a post-optimal computation can robustify the result to account for the actions of a rational victim. 
\end{remark}
% The intuition used by Algorithm~\ref{alg:deceiver_feas} is inherently connected to models of sub-rationality, in particular the notion of an $\epsilon$-Nash equilibrium.
% Since the binary search causes the victim to choose a strategy that achieves a payoff within $\delta$ of their equilibrium behavior, Algorithm~\ref{alg:deceiver_feas} guarantees precisely that she chooses $\delta$-Nash strategies.
% If the deceiver believes the victim to be sub-rational, this can be easily included in our deception design by choosing an appropriate value of $\delta$.
% Otherwise, he may simply choose an appropriately small value of $\delta$ to obtain a very good approximation of a perfectly rational victim's response.
% Furthermore, since each iteration of Algorithm~\ref{alg:deceiver_feas} only requires solving a LP, and the induced value $\vIndMax$ appears as an entry in the constraint \emph{vector}, well-established robust LP techniques~\cite{galPostoptimalAnalysesParametric2010} can be used to account for any error in this approximation.

\begin{proposition}
    \label{prop:deception_bin_search_robustify}
    Consider a zero-sum game $G \in \Gspace$, and pick a binary search tolerance $\binSearchTol > 0$. 
    Let $(\vbest, \rowIntelligentFeas, \deceptionIntelligentFeas, \colIntelligentFeas)$ be given by Algorithm~\ref{alg:deceiver_feas}. 
    When playing $(\rowIntelligentFeas, \deceptionIntelligentFeas)$, the worst-case outcome for the deceiver is upper bounded by 
    \begin{equation}
        \max_{y \in \colspace} \rowIntelligentFeas^\top G y \text{ s.t. } y \in \victimStrategiesSubrational{G+\deceptionIntelligentFeas}{\vbest}, 
    \end{equation}
    which can be computed by solving a LP. 
\end{proposition}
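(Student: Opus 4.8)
The plan is to write down the worst-case outcome explicitly, recognize it as a maximization of a linear objective over the rational response set $\victimStrategies{G + \deceptionIntelligentFeas}$, and then bound it by enlarging this set to the sub-rational set $\victimStrategiesSubrational{G + \deceptionIntelligentFeas}{\vbest}$. Concretely, when the deceiver commits to $(\rowIntelligentFeas, \deceptionIntelligentFeas)$, a victim obeying Assumption~\ref{asmp:victim_behavior} selects some $y \in \victimStrategies{G + \deceptionIntelligentFeas}$, so the deceiver's worst-case true payoff is $\max_{y \in \victimStrategies{G + \deceptionIntelligentFeas}} \rowIntelligentFeas^\top G y$. Since $\victimStrategies{G + \deceptionIntelligentFeas} = \victimStrategiesSubrational{G + \deceptionIntelligentFeas}{v_{G + \deceptionIntelligentFeas}}$ and the sub-rational sets are nested---$\victimStrategiesSubrational{G + \deceptionIntelligentFeas}{v'} \subseteq \victimStrategiesSubrational{G + \deceptionIntelligentFeas}{v}$ whenever $v \le v'$---it suffices to prove the single inequality $\vbest \le v_{G + \deceptionIntelligentFeas}$. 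This yields $\victimStrategies{G + \deceptionIntelligentFeas} \subseteq \victimStrategiesSubrational{G + \deceptionIntelligentFeas}{\vbest}$, and maximizing the same linear objective over the larger set can only increase it, giving the claimed upper bound.

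The crux is therefore establishing $\vbest \le v_{G + \deceptionIntelligentFeas}$, and here I would lean on two structural facts about Algorithm~\ref{alg:deceiver_feas}. First, since the algorithm retains the row index minimizing $e_i^\top G \colIntelligentFeas$ and sets $\rowIntelligentFeas = e_{i^\star}$, the returned value is $\vbest = \rowIntelligentFeas^\top G \colIntelligentFeas = \min_{1 \le i \le m} (G\colIntelligentFeas)_i = \min_{x \in \rowspace} x^\top G \colIntelligentFeas$. Second, the deception returned on Line~\ref{alg:deceiver_feas:lp} can be taken entrywise nonnegative: under the parameterization $D = \begin{bmatrix} d & \cdots & d \end{bmatrix}$ of Remark~\ref{rmk:blp_to_lp}, the only role of $d$ is to meet the lower bounds $G\colIntelligentFeas + d \ge \vcurr$, so replacing any optimizer by its positive part (taking $d_j \mapsto \max(0,\, \vcurr - (G\colIntelligentFeas)_j)$) preserves feasibility, the budget $\norm{d}_1 \le \Dmag$, and the objective, so that $\deceptionIntelligentFeas \ge 0$. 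Combining these, $v_{G + \deceptionIntelligentFeas} = \max_{y \in \colspace} \min_{x \in \rowspace} x^\top (G + \deceptionIntelligentFeas) y \ge \min_{x \in \rowspace} x^\top (G + \deceptionIntelligentFeas)\colIntelligentFeas \ge \min_{x \in \rowspace} x^\top G \colIntelligentFeas = \vbest$, where the middle inequality drops the nonnegative term $x^\top \deceptionIntelligentFeas \colIntelligentFeas \ge 0$.

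Finally, I would observe that the bounding quantity is itself a linear program: the feasible set $\victimStrategiesSubrational{G + \deceptionIntelligentFeas}{\vbest} = \{ y \in \colspace : (G + \deceptionIntelligentFeas) y \ge \vbest \}$ is a polytope cut out by the simplex constraints together with the $m$ linear inequalities $(G + \deceptionIntelligentFeas)y \ge \vbest$, and the objective $\rowIntelligentFeas^\top G y$ is linear in $y$. I expect the main obstacle to be the inequality $\vbest \le v_{G + \deceptionIntelligentFeas}$, since it couples three distinct quantities: the deceiver's realized payoff in the true game $G$, the victim's guaranteed payoff in the announced game $G + \deceptionIntelligentFeas$, and the threshold $\vcurr$ driving the binary search. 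The nonnegativity of the announced deception is the key that aligns them: intuitively, a stealthy lure only inflates the victim's perceived payoffs, so the value she believes she can secure never falls below what the deceiver actually extracts. Care is also needed to justify that the returned $\rowIntelligentFeas$ is genuinely the row minimizing $(G\colIntelligentFeas)_i$, so that $\vbest$ equals the security-type quantity $\min_{x \in \rowspace} x^\top G \colIntelligentFeas$ rather than some larger single-row payoff.
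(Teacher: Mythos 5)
Your skeleton is the same as the paper's: identify the worst case as $\max\{\rowIntelligentFeas^\top G y : y \in \victimStrategies{G+\deceptionIntelligentFeas}\}$, reduce everything to the single inequality $\vbest \le v_{G+\deceptionIntelligentFeas}$ via the nesting of the sets $\victimStrategiesSubrational{G+\deceptionIntelligentFeas}{v}$, and observe that the bound is an LP. The gap is in how you prove that inequality. Your argument needs $\deceptionIntelligentFeas \ge 0$ so that $\min_{x\in\rowspace} x^\top (G+\deceptionIntelligentFeas)\colIntelligentFeas \ge \min_{x\in\rowspace} x^\top G\colIntelligentFeas$, but nonnegativity is not a property of what Algorithm~\ref{alg:deceiver_feas} returns: the objective on Line~\ref{alg:deceiver_feas:lp} does not depend on $d$ at all, so \emph{every} feasible $d$ paired with the optimal $y$ is an optimizer, including ones with $d_j<0$ on rows where $(G\colCurr)_j$ already exceeds $\vcurr$, and an LP solver may return any of these. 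Your observation that one \emph{could} replace $d$ by its positive part without losing feasibility, budget, or objective value is correct, but it amends the algorithm rather than analyzing it; the proposition is a statement about the tuple the algorithm actually outputs, and for a returned $d$ with negative entries your middle step $x^\top\deceptionIntelligentFeas\colIntelligentFeas \ge 0$ fails.

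The detour is also avoidable, which is why the paper's proof is two lines. The constraint on Line~\ref{alg:deceiver_feas:lp} forces $(G+\deceptionIntelligentFeas)\colIntelligentFeas \ge \vcurr$, i.e., $\colIntelligentFeas$ itself certifies that the column player can secure at least the final accepted threshold in the announced game, so $v_{G+\deceptionIntelligentFeas} \ge \vcurr$ by~\eqref{eq:game_as_lp}, with no sign condition on the deception. The paper reads $\vbest$ as exactly that threshold (the only reading under which the subsequent remark $v_{G+\deceptionIntelligentFeas}-\binSearchTol < \vbest < v_{G+\deceptionIntelligentFeas}$ makes sense); the pseudocode's assignment $\vbest \gets e_i^\top G\colCurr$ is a notational slip, and your literal reading of $\vbest$ as the deceiver's realized payoff $\min_{1\le i\le m}(G\colIntelligentFeas)_i$ is what forces you into the harder, sign-dependent argument (and into the secondary issue you flag yourself, that identifying $\vbest$ with $\min_i(G\colIntelligentFeas)_i$ requires cross-feasibility of $\colIntelligentFeas$ across the $m$ row subproblems, which the algorithm's non-reset binary-search bounds do not obviously provide). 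Under the threshold reading, your first and third paragraphs already constitute the paper's proof and your second paragraph can be deleted.
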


\begin{proof}
    By Assumption~\ref{asmp:victim_behavior}, $y \in \victimStrategies{G + \deceptionIntelligentFeas}$. 
    Since $\colIntelligentFeas$ gives the victim a perceived payoff of at least $\vbest$ for $G + \deceptionIntelligentFeas$, $\vbest \le v_{G + \deceptionIntelligentFeas}$, and so $\victimStrategies{G + \deceptionIntelligentFeas} \subseteq \victimStrategiesSubrational{G + \deceptionIntelligentFeas}{v_{G + \deceptionIntelligentFeas}}$. 
    Therefore, for any $y$ chosen by the victim, 
    \begin{align*}
        \rowIntelligentFeas^\top G y & \le \max_{y \in \colspace} \rowIntelligentFeas^\top G y \text{ s.t. } y \in \victimStrategies{G+\deceptionIntelligentFeas} \\
        & \le \max_{y \in \colspace} \rowIntelligentFeas^\top G y \text{ s.t. } y \in \victimStrategiesSubrational{G+\deceptionIntelligentFeas}{\vbest},
    \end{align*}
    as claimed. 
\end{proof}
\begin{remark}
As a compliment to Theorem~\ref{thm:deceiver_feas}, Proposition~\ref{prop:deception_bin_search_robustify} gives a guaranteed bound on the worst-case deceptive performance of the result computed by Algorithm~\ref{alg:deceiver_feas} for \emph{any} binary search tolerance $\binSearchTol$.
Algorithm~\ref{alg:deceiver_feas} guarantees that $v_{G + \deceptionIntelligentFeas} - \binSearchTol < \vbest < v_{G + \deceptionIntelligentFeas}$, so as $\binSearchTol \to 0$, the conservatism incurred by over-approximating $\victimStrategies{G+\deceptionIntelligentFeas}$ with $\victimStrategiesSubrational{G+\deceptionIntelligentFeas}{\vbest}$ decreases.
Simultaneously, the optimization subroutines in Algorithm~\ref{alg:deceiver_feas} incentivize selecting a $\deceptionIntelligentFeas$ that benefits the deceiver, an advantage over constructing a similar bound for a randomly selected deception. 
In the next section, we examine the performance of $\deceptionIntelligentFeas$ and compare it to the exact deception of Theorem~\ref{thm:deceiver_exact}. 
\end{remark}

\section{Examples}
\label{sec:examples}

We now present numerical examples to complement and illustrate the theoretical results of Section~\ref{sec:contributions}.
Our simulations are implemented in Python and use the optimization package Gurobi~\cite{gurobi} via its official Python interface.
The source code for these examples is available at \texttt{https://tinyurl.com/77cerdsr}; a Gurobi license is required to run some of the larger optimization problems.
All examples were executed on an Arch Linux system equipped with an i9-14000K CPU and 128~GB of RAM.

We begin by comparing the performance of the deception strategies from Theorem~\ref{thm:deceiver_exact}, Theorem~\ref{thm:deceiver_feas}, and Proposition~\ref{prop:deception_bin_search_robustify}.
To approximate the average-case behavior of each method, we use \texttt{numpy}'s uniform sampling to randomly generate a collection of sample games.
For a range of deception budgets $\Delta$, we plot the mean and standard deviation of each method's performance in Figure~\ref{fig:improvement_vs_budget}.
Performance is measured by the difference between the honest value of the game (as in~\eqref{eq:game_value}) and the outcome achieved when the deceiver applies the given strategy and the victim responds according to Assumption~\ref{asmp:victim_behavior}.
We refer to this difference as the method's \emph{improvement}; a larger improvement indicates a more effective deception, while an honest deceiver would always have zero improvement.

\begin{figure}
    \centering
    \includesvg[width=\linewidth]{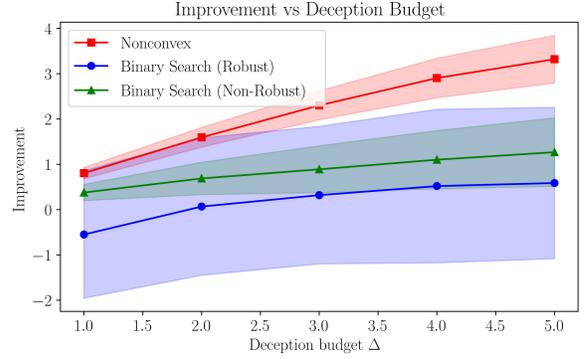}
    \caption{Depiction of the improvement in deceiver utility caused by various forms of deception. 
    $N=100$ games were uniformly sampled from $\Gspace$ with $m=5$ and $n=5$, and our three proposed deceptive strategies were used to compute a deception for each. 
    For deceptions computed via binary search, a numerical tolerance of $\binSearchTol=10^{-3}$ was used.
    The mean and standard deviation of the improvements caused by the computed deceptions are shown.}
    \label{fig:improvement_vs_budget}
\end{figure}

In Figure~\ref{fig:improvement_vs_budget}, we observe that all methods achieve greater improvement as the deception budget $\Delta$ increases, which aligns with intuition.
Moreover, the “non-convex” deception strategy—computed by solving~\eqref{eq:blp:bilinear}—consistently outperforms the simplified binary search methods in terms of improvement.
It is important to note that the “robust” binary search method provides a guaranteed lower bound on the improvement obtained from a selected deception.
In contrast, the “non-robust” method benefits from numerical optimism due to tolerance settings in the binary search, resulting in slightly higher—but less reliable—performance.
Figure~\ref{fig:improvement_vs_budget} suggests that the procedure described in Proposition~\ref{prop:deception_bin_search_robustify}, which removes this optimism, is particularly valuable when the deception budget is small.

The additional improvement achieved by the ``non-convex'' method in Figure~\ref{fig:improvement_vs_budget} comes at a significant computational cost, as illustrated in Figure~\ref{fig:computation_time_vs_size}.
In real-world security scenarios, both defenders and attackers may face large action spaces, making computational efficiency a critical concern.
To assess scalability, we measure the time required to compute a deception strategy for randomly generated games of increasing size, comparing the performance of our two primary approaches.

\begin{figure}
    \centering
    \includesvg[width=\linewidth]{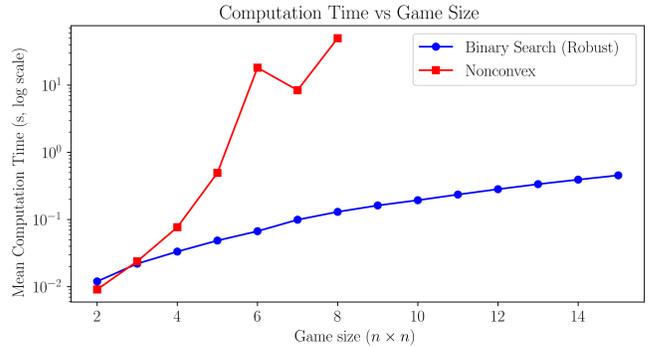}
    \caption{Demonstration of the scaling properties of the time needed to compute a deception for both optimal and binary search deceptive strategies. 
    For each game size $n$, $N = 10$ games were sampled uniformly from $\R^{n \times n}$ and both methods were used to compute a deception for every game with a budget of $\Dmag = 3$. 
    The computation time for both methods were measured, and the mean for each method is displayed here. }
    \label{fig:computation_time_vs_size}
\end{figure}

Figure~\ref{fig:computation_time_vs_size} illustrates that the complexity of solving~\eqref{eq:blp:bilinear} increases rapidly with game size, rendering it impractical in certain settings.
In such cases, the approximate deception strategy computed via Algorithm~\ref{alg:deceiver_feas} offers a viable alternative.

Finally, we examine the impact of the binary search tolerance $\binSearchTol$ on the performance of the deception computed by Algorithm~\ref{alg:deceiver_feas}.
Using a random sample of games of fixed size, we evaluate the improvement achieved by deceptions computed under varying binary search tolerances.
Figure~\ref{fig:improvement_vs_tol} displays both the optimistic improvements reported directly by Algorithm~\ref{alg:deceiver_feas} and the guaranteed bounds provided by Proposition~\ref{prop:deception_bin_search_robustify}.

\begin{figure}
    \centering
    \includesvg[width=\linewidth]{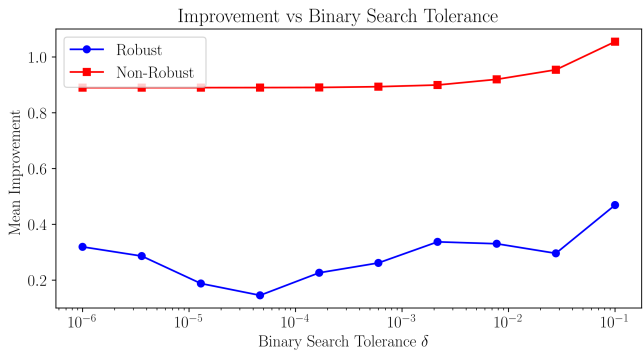}
    \caption{Improvement caused by binary search deception, differentiated between the expected improvement of the non-robust method and the guaranteed improvement of the robust method.
    $N=100$ games were sampled uniformly from $\Gspace$ with $m = 5$ and $n = 5$, and the binary search method was used to compute a deception with budget $\Dmag = 3$ for a range of binary search tolerances $\delta$.
    For each game and tolerance, the expected improvement of the binary search deception and the guaranteed improvement after enforcing robustness were both recorded, and the mean values are displayed here.}
    
    \label{fig:improvement_vs_tol}
\end{figure}

We observe that the robust method's guaranteed improvement is consistently lower than that of the non-robust method, due to the optimistic assumptions embedded in the binary search tolerance $\delta$.
As $\delta$ increases, these assumptions cause the expected improvement of the non-robust method to grow monotonically.
Interestingly, we do not observe a similar trend in the guaranteed improvement as $\delta$ varies.
This is because the bound provided by Proposition~\ref{prop:deception_bin_search_robustify} is influenced by both conservative estimation and the uncertain relationship between the victim's strategy selection for $\Gpert$ and the resulting payoff to the deceiver.

\section{Conclusion}
\label{sec:conclusion}

This work introduced a novel framework for Honey-X deception grounded in Stackelberg game theory.
We analyzed optimal strategies for both the deceiver and the victim, accounting for varying levels of rationality in the victim's response to deception.
In addition, we developed and presented multiple computational approaches that a deceiver can employ to exploit deception, ranging from theoretically optimal formulations to efficiently computable approximations.

For future work we plan to extend the framework to dynamic or repeated interactions, where deception evolves over time and players update beliefs.

% \section*{References}
\bibliographystyle{ieeetr}
\bibliography{gould_ref}

\end{document}